\documentclass[journal]{IEEEtran}
\IEEEoverridecommandlockouts

\usepackage{amssymb}
\usepackage[cmex10]{amsmath}
\usepackage{stfloats}
\usepackage{graphicx}
\usepackage{subfigure}
\usepackage{tabularx}
\usepackage{epsfig,epsf,color,balance,cite}
\usepackage{verbatim}
\usepackage{url}
\usepackage{bm}
\usepackage{booktabs}
\usepackage{float}
\usepackage{subfigure}
\usepackage{subcaption}
\usepackage{caption}
\usepackage{amsmath}
\allowdisplaybreaks[2]
\usepackage[ruled,linesnumbered]{algorithm2e}

\usepackage{algorithmic}
\newtheorem{theorem}{\bf Theorem}
\newtheorem{lemma}{\bf Lemma}

\newtheorem{proof}{Proof}
\hyphenation{net-works}

\usepackage{color}
\definecolor{myc1}{rgb}{0,0,0}

\begin{document}

\title{ 
Energy Efficient Fluid Antenna Relay (FAR)-Assisted Wireless Communications
}

\author{
\IEEEauthorblockN{Ruopeng Xu,
                  Zhaohui Yang~\IEEEmembership{Member,~IEEE},
                  Zhaoyang Zhang~\IEEEmembership{Senior Member,~IEEE},\\
                  Mohammad Shikh-Bahaei~\IEEEmembership{Senior Member,~IEEE},
                  Kaibin Huang~\IEEEmembership{Fellow,~IEEE}, and Dusit Niyato~\IEEEmembership{Fellow,~IEEE}
                 } 
\thanks{R. Xu, Z. Yang, and Z. Zhang are with the College of Information Science and Electronic Engineering, Zhejiang University, and also with Zhejiang Provincial Key Laboratory of Info. Proc., Commun. \& Netw. (IPCAN), Hangzhou, 310027, China (e-mails: \{ruopengxu, yang\_zhaohui, ning\_ming\}@zju.edu.cn).}
\thanks{Mohammad Shikh-Bahaei is with the Department of Engineering, King’s College London, London, UK (email: m.sbahaei@kcl.ac.uk).}
\thanks{Kaibin Huang is with the Department of Electrical and Electronic Engineering, The University of Hong Kong, Hong Kong SAR (email: huangkb@eee.hku.hk).}
\thanks{Dusit Niyato is with the School of Computer Engineering, Nanyang Technological University, Singapore(email: dniyato@ntu.edu.sg).}

\vspace{-2em}
}

\maketitle
\begin{abstract}
In this paper, we propose an energy efficient wireless communication system based on fluid antenna relay (FAR) to solve the problem of non-line-of-sight (NLoS) links caused by blockages with considering the physical properties. Driven by the demand for the sixth generation (6G) communication, fluid antenna systems (FASs) have become a key technology due to their flexibility in dynamically adjusting antenna positions. Existing research on FAS primarily focuses on line-of-sight (LoS) communication scenarios, and neglects the situations where only NLoS links exist. To address the issues posted by NLoS communication, we design an FAR-assisted communication system combined with amplify-and-forward (AF) protocol. In order to alleviate the high energy consumption introduced by AF protocol while ensuring communication quality, we formulate an energy efficiency (EE) maximization problem. By optimizing the positions of the fluid antennas (FAs) on both sides of the FAR, we achieve controllable phase shifts of the signals transmitting through the blockage which causes the NLoS link. Besides, we establish a channel model that jointly considers the blockage-through matrix, large-scale fading, and small-scale fading. To maximize the EE of the system, we jointly optimize the FAR position, FA positions, power control, and beamforming design under given constraints, and propose an iterative algorithm to solve this formulated optimization problem. Simulation results show that the proposed algorithm outperforms the traditional schemes in terms of EE, achieving up to $23.39\%$ and $39.94\%$ higher EE than the conventional reconfigurable intelligent surface (RIS) scheme and traditional AF relay scheme, respectively.
\end{abstract}

\begin{IEEEkeywords}
Fluid antenna system (FAS), fluid antenna relay (FAR), energy efficiency (EE)
\end{IEEEkeywords}
\IEEEpeerreviewmaketitle

\section{Introduction}\label{Introduction}
Driven by the fast development of the sixth generation (6G) wireless communication and its high demand service requirements, fluid antenna system (FAS) has recently emerged as a promising technology and has drawn plenty of attention from both academia and industry. Unlike traditional antenna systems with fixed physical positions, FAS allows fluid antennas (FAs) to flexibly and instantly adjust their locations in the discrete or continuous space within a predefined region. Authors in \cite{wong2020fluid} first introduce the concept of FAS, proposing the exact and approximate closed-form expressions and upper bound for the outage probability of FAS. Soon afterwards, research on both theoretical basis \cite{wong2020performance,chai2022port,psomas2023continuous}, and physical architectures\cite{abu2021liquid,zhang2024pixel,ghadi2024physical}, has flourished, laying foundations of the development of FAS. 

Advancements in fields, such as extra-large multiple-input multiple-output (MIMO)\cite{wang2024fluid} and artificial intelligence (AI)\cite{wang2024ai}, have promoted the study of FAS. At the same time, 
research on FAS, as well, has constantly boosted the development of other fields in wireless communications, such as multiple access technology\cite{wong2021fluid}, channel estimation\cite{skouroumounis2022fluid}, integrated sensing and communication (ISAC)\cite{zhou2024fluid}, MIMO system\cite{ye2023fluid}, near-field communication\cite{10767351}, and secret communication\cite{10092780}. The work in \cite{wong2021fluid} explores the possibility for utilizing a single FA of a mobile user for multiple access, i.e., fluid antenna multiple access (FAMA). In FAS without perfect channel state information (CSI), the authors in \cite{skouroumounis2022fluid} study the trade-off between system performance and channel estimation quality. In different communication systems, the researchers in \cite{zhou2024fluid} utilize FAs to enhance the performance of ISAC system, and the work in \cite{ye2023fluid} has designed the FA-assisted MIMO system with statistical CSI. In terms of different communication scenarios, authors in \cite{10767351}
investigate a point-to-point near-field communication with FAS. In the system proposed in \cite{10092780}, secrecy rate of the user with a single FA is comparable to that of the user utilizing maximum ratio combining (MRC) with multiple antennas. 

However, existing research on FAS primarily focuses on scenarios with line-of-sight (LoS) links but neglects the challenges posed by blockages that result in non-LoS (NLoS) links. This leaves a research gap in FAS-aided wireless communication systems that limits FAS’s applicability in practical environments with blockages. Auxiliary communication devices for assistance, such as a reconfigurable intelligent surface (RIS)\cite{gan2021ris,magbool2024surveyintegratedsensingcommunication,10720877}, a simultaneously transmitting and reflecting (STAR)-RIS\cite{10133841,10380743,10685065}, and an autonomous aerial vehicle (AAV)\cite{abdou2024sum} have already been explored for relieving NLoS challenges. However, RISs can only reflect signals and thus cannot solve NLoS issues where LoS links cannot be established via reflection alone. For example, if the receiver is surrounded by blockages requiring the signal to penetrate them, RISs cannot establish the LoS link. On the other hand, the system models in STAR-RIS research primarily assume that signals only need to pass through STAR-RIS to reach the users on the other side of the blockage, ignoring the impact of obstacles that deploy STAR-RIS inside on signal transmission. In particular, different widths or different media of the obstacles may cause different amplitude attenuation and phase shift of signals. The study in \cite{abdou2024sum} considers the scenario where an AAV is introduced as a relay with decode-and-forward (DF) protocol to receive and retransmit signals, in which the assumption with DF protocol puts high requirements to the relay network and is unfriendly with energy consumption. 

These challenges motivate our design of a fluid antenna relay (FAR) that dynamically optimizes antenna positions on both sides of a blockage, enabling efficient NLoS signal transmission with the help of a specially designed medium within the blockage. To clarify the practical deployment of the FAR system, we primarily focus on the two parts: 1) the practical deployment of the FAS on both sides of the blockage, and 2) the embedding process of the isotropic medium within the blockage. From this point of view, we note that the research on FAS hardware designs has been constantly developing\cite{10740058,9539785,9977471,7950976}, providing the practical foundation for the deployment of FAs on the surfaces of the blockage. On the other hand, research works\cite{10907789,11059266,10930784,10707271} have considered scenarios where communication devices are embedded in the blockages, such as walls or doors, to help signal transmission. Hence, the embedding process of the isotropic medium within the blockage can be achieved adopting a similar embedded technology. Using the amplify-and-forward (AF) protocol which is more hardware-friendly compared with DF protocol, FAR can receive and forward the signals. Besides, by adjusting the positions of FAs on the FAR, we can complete controllable phase shift of the transmission through the blockage. 

The key contributions of this paper include:
\begin{itemize}
    \item We first model an FAR with the AF protocol, and then derive the amplitude attenuation coefficients and phase shift of signals with the help of the FAR transmitting through the blockage. Utilizing Maxwell's equations, we prove that the phase shift of signals can be controllable with the change of positions of FAs on both sides of the FAR. On this basis, the blockage-through matrix $\bm{\Theta}$ is modeled and then the channel model jointly considering the blockage-through matrix, large-scale fading influenced by the positions of reference points on both sides of the FAR, and small-scale fading caused by the phase difference of FAs is established.
    \item We investigate an uplink wireless communication system with the FAR that can receive the signals from users and transmit them to the BS with a controllable signal phase shift. To maximize the energy efficiency (EE) of the proposed FAR-assisted system, we jointly optimize the position of reference points, which locates the predefined region for FAs to switch their positions, the positions of the FAs, power control of users, and beamforming design at the BS.  On this basis, we formulate an optimization problem with the objective to maximize the EE under the minimum rate constraint of users, maximum power constraint of users, and position constraints of the FAs.
    \item To solve the EE maximization problem for multiples users, a sub-optimal solution is obtained by an iterative algorithm through solving the large-scale fading optimization sub-problem, the small-scale fading optimization sub-problem, and joint power control and beamforming design sub-problem. For the large-scale fading optimization sub-problem, we first derive the closed form of one reference point of the FAR, and a sub-optimal solution to the other reference point is obtained by the successive convex approximation (SCA) method. For the small-scale fading optimization sub-problem, we propose an alternating optimization (AO) method to alternately obtain the sub-optimal position of the FAs at users, the FAR, and the BS. In terms of joint power control and beamforming design sub-problem, we introduce a Dinkelbach method based algorithm to solve it.
\end{itemize}
Simulation results show that our proposed system outperforms the STAR-RIS scheme and the traditional AF relay scheme, improving up to $23.39\%$ and $39.94\%$ EE, respectively.

\textit{Notation}: Unless otherwise explicitly stated, we use lower case letters to denote scalars, bold lower case letters to denote vectors, and bold upper letters to denote matrices. $j = \sqrt{-1}$ is the imaginary unit. The subscripts $(\cdot)^{*}$, $(\cdot)^{T}$ and $(\cdot)^{H}$ denote the complex conjugate, transpose and conjugate transpose (Hermitian) operations, respectively. The operation $\mathrm{diag}(\mathbf{a})$ generates a diagonal matrix with the elements of $\mathbf{a}$ along its main diagonal, the operation $\mathrm{Re}(a)$ obtains the real part of scaler $a$, 
and the operation $\mathrm{tr}(\mathbf{A})$ generates the trace of $\mathbf{A}$. $\mathrm{vec}(\mathbf{A})$ represents the column-wise vectorization of $\mathbf{A}$. $\mathcal{CN}(\mathbf{a},\mathbf{B})$ represents the symmetric complex-valued Gaussian distribution with mean $\mathbf{a}$ and covariance matrix $\mathbf{B}$. $||\mathbf{A}||_2$ is the spectral norm of $\mathbf{A}$, $||\mathbf{A}||_F$ is the Euclidean (Frobenius) norm of $\mathbf{A}$, $||\mathbf{a}||_2$ stands for the Euclidean norm of $\mathbf{a}$, and $|a|$ represents the modulus of $a$. In terms of operators, $\times$ is cross product operator, $\cdot$ is the dot product operator, and $\nabla$ is the gradient operator. $\mathbf{A}\succeq0$ means the $\mathbf{A}$ is semi-positive definite, and $\mathbf{A}\preceq0$ means the $\mathbf{A}$ is semi-negative definite. $\mathbf{I}_N$ denotes an identity matrix in $N \times N$ dimension. The sets of $M \times N$ dimensional complex and real matrices are denoted by $\mathbb{C}^{M \times N}$ and $\mathbb{R}^{M \times N}$, respectively.

\begin{figure}[t]
\centering
\includegraphics[width=1\linewidth]{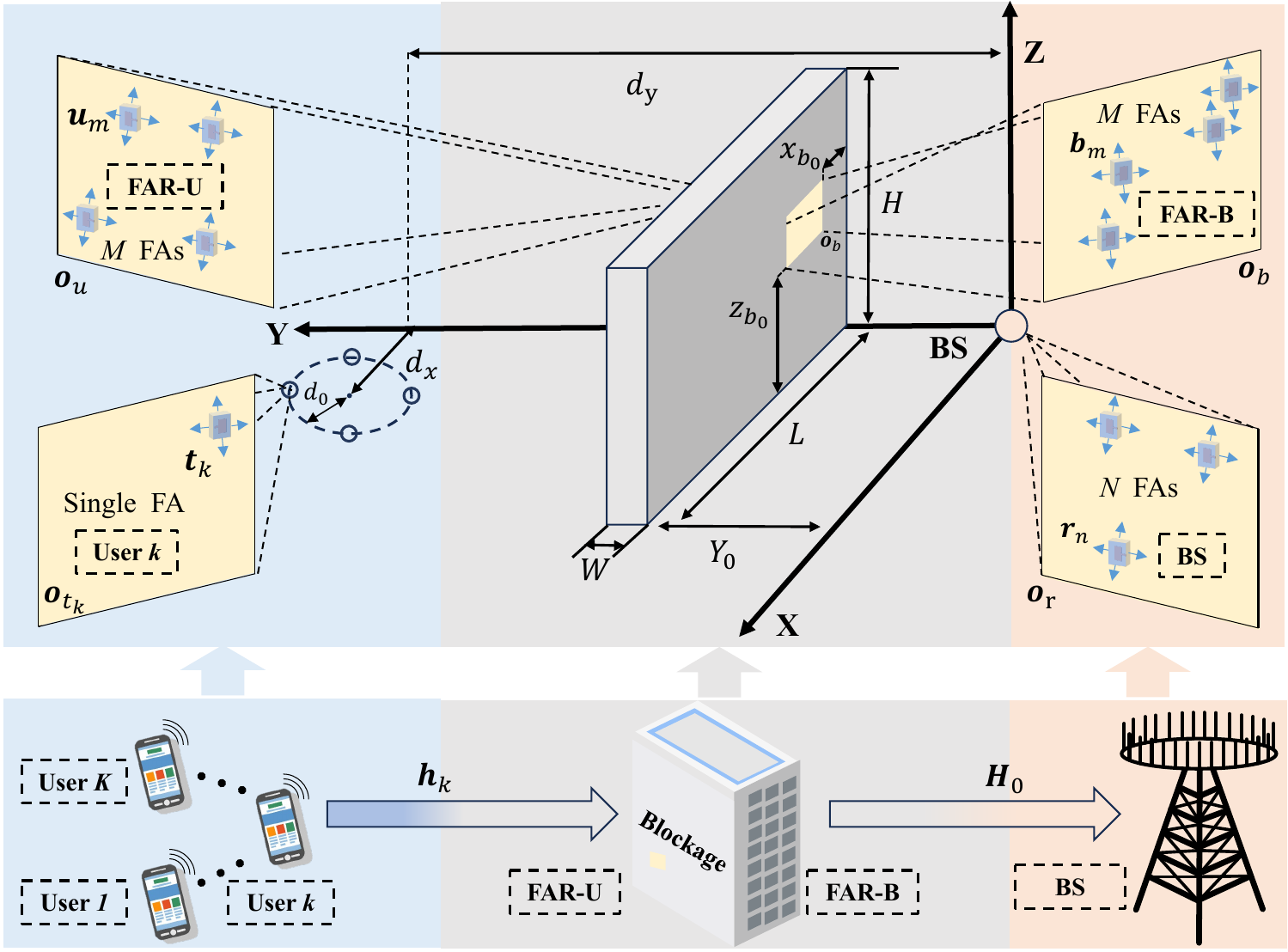}
\caption{System model of the FAR-assisted wireless communication.}
\label{SystemModel}
\end{figure}

\section{System Model}
As illustrated in Fig.~\ref{SystemModel}, we consider an upink FAR-assisted wireless communication network with one BS, $K$ users, and a blockage deployed FAR in the LoS link between the BS and the multiple users. In the considered model, every user has a single FA, the BS has $N$ FAs, $M$ FAs are deployed on each side of the blockage,  and all FAs are assumed to be connected to radio frequency chains via flexible cables, so that they can freely change their positions within a given region at real time\cite{ma2023mimo}. Without loss of generality, we assume that the blockage is a cubic structure (such as a wall, a window, or a door) with the length of $L$, the width of $W$, and the height of $H$. To describe spatial relations, we introduce the three-dimensional Cartesian coordinate system, where the locations of the BS and user $k$ are $\mathbf{o}_r = [0,0,0]^T$ and $\mathbf{o}_{k} = [x_{k_0},y_{k_0},0]^T$, respectively, and the distance between the FAR and BS along y-axis is $Y_0$.

In the following of this section, we first introduce the basic model of the FAR from oblique view, front view, and right view, and on this basis, we model the process how signals transmit through the blockage with the help of the FAR. Subsequently, we give the channel model from user $k$ to BS jointly considering small-scale fading and large-scale fading. Based on the channel model, the signal received at the BS and its corresponding signal-to-interference-plus-noise (SINR) are represented. Jointly considering the optimization with large-scale fading, small-scale fading, user power control, and beamforming design, we formulate the problem to maximize the EE of the proposed system.

\begin{figure}[t]
\centering
\includegraphics[width=1\linewidth]{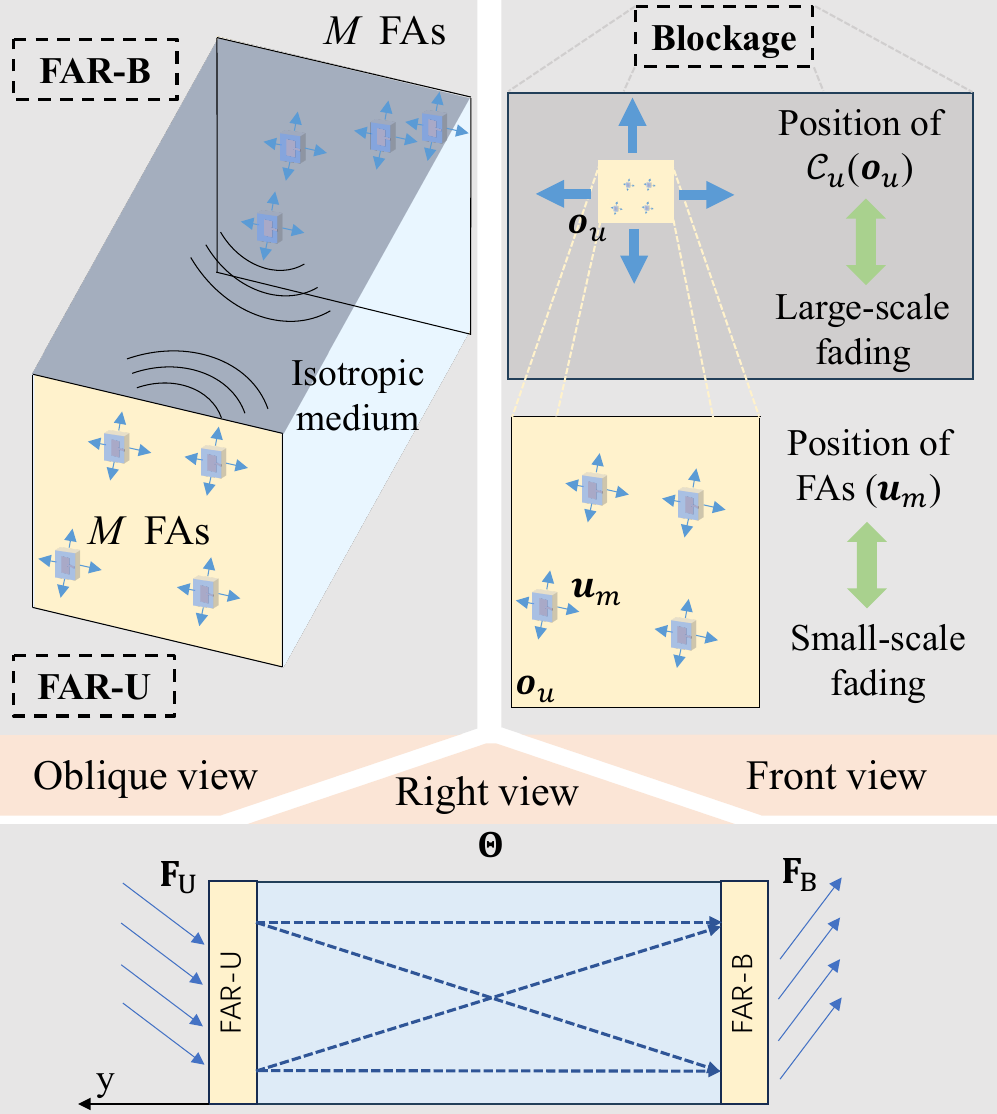}
\caption{An illustration of the FAR model.}
\label{FARModel}
\end{figure}

\subsection{Fluid Antenna Relay Model}
As shown in the oblique view part of Fig.~\ref{FARModel}, the FAR architecture integrates two key components: 1) FAs on each side of the blockage, capable of dynamically adjusting their positions within predefined regions, and 2) an isotropic medium embedded within the blockage to enable signal transmission. In particular, in the considered uplink transmission, $M$ FAs on the FAR receiving signals from the users are denoted as an FAR-U (FAR-User side), and $M$ FAs on the other side are denoted as an FAR-B (FAR-BS side), which are connected with the isotropic medium embedded within the blockage.

The regions $\mathcal{C}_u$ and $\mathcal{C}_b$, each of size $C \times C$, define the movable ranges for the FAR-U and the FAR-B, respectively. Set the FAR-U as an example, as revealed in the front view part of Fig.~\ref{FARModel}, FAs are deployed at the surface of the blockage and can move freely within $\mathcal{C}_u$. Let $\mathbf{o}_u = [x_{u_0},Y_0+W,z_{u_0}]^T$ denote the reference point of $\mathcal{C}_u$,  and $\mathcal{C}_u$ can be mathematically denoted by $\{x,y,z|,x\in[x_{u_0},x_{u_0}+C],y = Y_0+W, z \in [z_{u_0},z_{u_0}+C]\}$. Accordingly, the three-dimensional location of the $m$-th FA of the FAR-U can be denoted by $\mathbf{u}_m = [x_{u_m},y_{u_m},z_{u_m}]^T$, i.e., $\mathbf{u}_m \in \mathcal{C}_u$. Since the size of the blockage surface is comparable with the signal propagation distance, the position of $\mathcal{C}_u$ influences the large-scale fading of signal transmission. When $\mathcal{C}_u$ is determined, FA position within $\mathcal{C}_u$ influences small-scale fading of signal transmission.  Similarly, letting $\mathbf{o}_b = [x_{b_0},Y_0,z_{b_0}]^T$ denote the reference point of $\mathcal{C}_b$, and $\mathbf{b}_m=[x_{b_m},y_{b_m},z_{b_m}]^T$ denote the position of the $m$-th FA of the FAR-B, we have $\mathcal{C}_b = \{x,y,z|,x\in[x_{b_0},x_{b_0}+C],y = Y_0, z\in[z_{b_0},z_{b_0}+C] \}$, and $\mathbf{b}_m \in \mathcal{C}_b$.

Moreover, as illustrated in the right view part of the FAR model, in an uplink transmission, the FAR-U receives signals from users and utilizes the AF protocol to transmit the signals to the FAR-B, where the forwarding matrix of the FAR-U is $\mathbf{F}_U = \mathrm{diag}(f_1^u,\dots,f_M^u)$. Similarly, the forwarding matrix of the FAR-B can be denoted as $\mathbf{F}_B = \mathrm{diag}(f_1^b,\dots, f_M^b)$. 

To model the process of signal transmitting through the blockage, we assume that the isotropic medium is with conductivity $\sigma$, permittivity $\varepsilon$, and magnetic permeability $\mu$.

\begin{theorem}
   During the transmission through the blockage, the amplitude of the signal from the $q$-th FA of the FAR-U and the $p$-th FA of the FAR-B attenuates with a coefficient $\alpha_{pq}$ and with a phase shift $e^{-j\theta_{pq}}$. They can be presented as 
    \begin{equation}
   \alpha_{pq} \approx e^{-C_1\mathrm{sec}^{\frac{1}{2}}(\gamma)\mathrm{sin}(\frac{1}{2}\gamma)||\mathbf{o}_b-\mathbf{o}_u||_2} \triangleq \alpha,     
    \end{equation}
  \begin{equation}\label{pahseShift}
              \theta_{pq} =  C_1\mathrm{sec}^{\frac{1}{2}}(\gamma)\mathrm{cos}(\frac{1}{2}\gamma)||\mathbf{b}_p -\mathbf{u}_q||_2,
  \end{equation}
    where $C_1 = \omega\sqrt{\mu\varepsilon}$, $\gamma = \mathrm{arctan}(\frac{\sigma} {\omega\varepsilon})$, $\omega$ is the angular frequency, $||\mathbf{o}_b-\mathbf{o}_u||_2$ is the distance between reference points of the FAR-U and the FAR-B, and $||\mathbf{b}_p-\mathbf{u}_q||_2$ is the distance between the $q$-th FA at FAR-U and the $p$-th FA at FAR-B.
\end{theorem}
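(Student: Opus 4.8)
The plan is to model the embedded isotropic medium as a homogeneous lossy dielectric and to read off the attenuation and phase constants from the complex propagation constant of a plane wave traversing it, as signalled by the phrase ``utilizing Maxwell's equations.'' First I would start from the source-free, time-harmonic Maxwell's equations inside the medium, $\nabla \times \mathbf{E} = -j\omega\mu\mathbf{H}$ and $\nabla \times \mathbf{H} = (\sigma + j\omega\varepsilon)\mathbf{E}$, where the conduction and displacement currents are combined into the effective complex conductivity $\sigma + j\omega\varepsilon$. Taking the curl of the first equation, substituting the second, and using $\nabla \cdot \mathbf{E} = 0$ in the charge-free medium yields the vector Helmholtz equation
\begin{equation}
\nabla^2\mathbf{E} = \kappa^2\mathbf{E}, \qquad \kappa^2 = j\omega\mu(\sigma + j\omega\varepsilon).
\end{equation}
A uniform-plane-wave solution of the form $\mathbf{E}\propto e^{-\kappa\ell}$ propagating a distance $\ell$ through the medium then reduces the problem to extracting the real part $a=\mathrm{Re}(\kappa)$, which controls the amplitude decay $e^{-a\ell}$, and the imaginary part $b=\mathrm{Im}(\kappa)$, which controls the phase $e^{-jb\ell}$.

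The second step is to evaluate $\kappa$ in closed form. Expanding the product gives $\kappa^2 = -\omega^2\mu\varepsilon + j\omega\mu\sigma$, whose modulus is $\omega^2\mu\varepsilon\sqrt{1+(\sigma/\omega\varepsilon)^2} = \omega^2\mu\varepsilon\sec\gamma$ and whose argument is $\pi-\gamma$ with $\gamma=\arctan(\sigma/\omega\varepsilon)$. Taking the principal square root yields $\kappa = C_1\sec^{\frac12}(\gamma)\,e^{j(\pi-\gamma)/2}$ with $C_1=\omega\sqrt{\mu\varepsilon}$, so that
\begin{equation}
a = C_1\sec^{\frac12}(\gamma)\cos\!\Big(\tfrac{\pi}{2}-\tfrac{\gamma}{2}\Big) = C_1\sec^{\frac12}(\gamma)\sin\!\big(\tfrac{\gamma}{2}\big),
\end{equation}
\begin{equation}
b = C_1\sec^{\frac12}(\gamma)\sin\!\Big(\tfrac{\pi}{2}-\tfrac{\gamma}{2}\Big) = C_1\sec^{\frac12}(\gamma)\cos\!\big(\tfrac{\gamma}{2}\big),
\end{equation}
where the last equalities are the complementary-angle (cofunction) identities. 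These are exactly the coefficients appearing in the exponent of $\alpha$ and in front of the distance in $\theta_{pq}$.

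The final step is geometric: identifying the propagation length $\ell$ for the signal travelling from the $q$-th FA of the FAR-U to the $p$-th FA of the FAR-B. Treating the through-blockage path as a straight ray, the phase accumulates over the exact FA-to-FA separation $\ell = ||\mathbf{b}_p - \mathbf{u}_q||_2$, giving $\theta_{pq} = b\,||\mathbf{b}_p-\mathbf{u}_q||_2$ as claimed, which is what makes the phase tunable through the FA positions. For the amplitude, because $a$ is small and slowly varying while the transverse FA offsets are tiny compared with the medium width, the attenuation over every FA pair is nearly identical; replacing the per-pair distance by the reference-point distance $||\mathbf{o}_b-\mathbf{o}_u||_2$ then yields the pair-independent $\alpha_{pq}\approx\alpha$. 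The main obstacle I expect is justifying this asymmetric treatment rigorously: the phase constant $b$ is of order $2\pi/\lambda$, so the phase is acutely sensitive to sub-wavelength path differences and must retain the exact FA distance, whereas the attenuation constant $a$ tolerates collapsing to the reference point. Making precise the plane-wave/ray reduction of the spreading field, and bounding the error incurred when the per-pair distance is replaced by the reference-point distance inside $\alpha$, is where the careful estimates will be needed.
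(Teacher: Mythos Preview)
Your proposal is correct and follows essentially the same route as the paper: derive the Helmholtz equation in the lossy isotropic medium from the time-harmonic Maxwell equations, extract the complex propagation constant, separate it into attenuation and phase constants, and then apply the asymmetric distance approximation (exact FA-to-FA distance for the phase, reference-point distance for the amplitude). Your polar-form extraction of $\kappa$ via the argument $\pi-\gamma$ and the cofunction identities is in fact a slightly cleaner bookkeeping of the same computation the paper carries out, and your explicit physical justification for why the amplitude tolerates the reference-point approximation while the phase does not goes a bit beyond what the paper states.
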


\begin{proof}
    See Appendix~\ref{Proof1}.
\end{proof}

Without loss of generality, we set $\mathbf{F}_U = \mathbf{F}_B = \sqrt{\alpha^{-1}}\mathbf{I}_M$ as the forwarding matrices of the FAR-U and the FAR-B, respectively. The process of the signals transmitted by the FAR can be modeled by the blockage-through matrix $\bm{\Theta}$, describing the controllable phase shifts due to the transmission through the blockage as followings:
\begin{align}\label{FARMatrix}
    \begin{aligned}
      \bm{\Theta} 
      &=
      \mathbf{F}_B
      \begin{bmatrix}
      \alpha_{11}e^{-j\theta_{11}}  & \dots & \alpha_{1M}e^{-j\theta_{1M}} \\
      \vdots & &  \vdots\\
      \alpha_{M1}e^{-j\theta_{M1}}  & \dots & \alpha_{MM}e^{-j\theta_{MM}} \\
      \end{bmatrix}
      \mathbf{F}_U\\ 
      &=\begin{bmatrix}
      e^{-j\theta_{11}}  & \dots & e^{-j\theta_{1M}} \\
      \vdots & &  \vdots\\
      e^{-j\theta_{M1}}  & \dots & e^{-j\theta_{MM}} \\
      \end{bmatrix} \in \mathbb{C}^{M\times M}.
    \end{aligned}
\end{align}


\subsection{Channel Model}

Similarly to $\mathcal{C}_u$ and $\mathcal{C}_b$, the regions given for an FA at user $k$ and FAs of BS to be freely reconfigurable are denoted as $\mathcal{C}_{k}$ and $\mathcal{C}_r$, respectively, which are both in the size of $C \times C$.  All regions for FAs to be reconfigurable are assumed to be much smaller than the signal propagation distance\cite{zhou2024fluid,ye2023fluid,ma2023mimo}, and the far-filed model can be utilized. Thus, the channel is modeled with a planar wavefront\cite{zhu2023modeling}, and FAs in the same region have the same angle of departure (AoD)/angle of arrival (AoA) and the same amplitude. For user $k$, to precisely describe the phase difference of the FA in different positions, we consider the reference point of $\mathcal{C}_k$ to be $\mathbf{o}_{k}$, the position vector of the FA of user $k$ to be $\mathbf{t}_k = [x_{t_k}, y_{k_0}, z_{t_k}]^T$, and the elevation and azimuth of AoD can be denoted as $\theta_{t_k} \in [0,\pi]$ and $\phi_{t_k} \in [0,\pi]$, respectively. As a result, the phase difference between the FA at $\mathbf{t}_k$ and $\mathbf{o}_{k}$ can be given as
\begin{equation}\label{phaseDifference}
\rho(\mathbf{t}_k,\bm{\kappa}_{t_k},\mathbf{o}_{k}) 
    = e^{j \frac{2\pi}{\lambda}(\bm{\kappa}_{t_k})^T (\mathbf{t}_k - \mathbf{o}_{k})},
\end{equation}
where $\bm{\kappa}_{t_k} = [\mathrm{sin}\theta_{t_k}\mathrm{cos}\phi_{t_k},\mathrm{cos}\theta_{t_k},\mathrm{sin}\theta_{t_k}\mathrm{sin}\phi_{t_k}]^T$ is the wave vector of user $k$ and $\lambda $ is the carrier wavelength. Given that $\bm{\kappa}_{t_k}$ and $\mathbf{o}_{k}$ are generally determined, the phase difference between $\mathbf{o}_{k}$ and $\mathbf{t}_{k}$ can be denoted as $\rho(\mathbf{t}_k)$.

Similarly, we denote the phase difference between the $m$-th FA at the FAR-U and $\mathbf{o}_u$, the $m$-th FA at the FAR-B and $\mathbf{o}_b$, and the $n$th FA at BS and $\mathbf{o}_r$ by $\rho(\mathbf{u}_m)$, $\rho(\mathbf{b}_m)$, and $\rho(\mathbf{r}_n)$, respectively, where $\mathbf{r}_n=[x_{r_n},0,z_{r_n}]^T$ is the position vector of the $n$-th FA at the BS.

$\mathbf{U} = [\mathbf{u}_1,\dots,\mathbf{u}_M]$, $\mathbf{B} = [\mathbf{b}_1,\dots,\mathbf{b}_M]$, and $\mathbf{R} = [\mathbf{r}_1,\dots,\mathbf{r}_N]$ are the position matrices of FAs for the FAR-U, the FAR-B, and at BS, respectively. Then, the phase difference vectors of the FAR-U, the FAR-B, and BS can be represented by $\bm{\rho} (\mathbf{U}) = [\rho(\mathbf{u}_1),\dots,\rho(\mathbf{u}_M)]^T$, $\bm{\rho} (\mathbf{B}) = [\rho(\mathbf{b}_1),\dots,\rho(\mathbf{b}_M)]^T$, and $\bm{\rho} (\mathbf{R}) = [\rho(\mathbf{r}_1),\dots,\rho(\mathbf{r}_N)]^T$, respectively. In addition, we denote $\mathbf{T}=[\mathbf{t}_1,\dots,\mathbf{t}_K]$ as the position matrix to represent the FA positions of $K$ users.

In one specific transmission, the users first transmit the signals to the FAR. Then, the signals transmit through the blockage and are received by the BS. To jointly consider small-scale fading and large-scale fading, we use the Rician distribution to model the small-scale fading\cite{gan2021ris} and  use symbol $\sqrt{\beta_k}$ to denote large-scale fading from user $k$ to the FAR-U\cite{laourine2008capacity}, which is given as
\begin{equation}
   \beta_k = \frac{\mu_k}{D_k^l}, \forall k \in \{1,\dots,K\},
\end{equation}
where the large-scale fading coefficient $\mu_k$ is an independent and identically distributed (i.i.d.) Gamma random variable, $D_k = ||\mathbf{t}_k-\mathbf{o}_u||_2$ is the distance between user $k$ and the FAR-U, and $l$ is the path-loss exponent with typical values ranging from $2$ to $6$\cite{yang2015role}. On this basis, the channel from user $k$ to the FAR-U can be modeled as

\begin{equation}\label{hk}
    \sqrt{\beta_k}\mathbf{h}_k = \sqrt{\beta_k}\left(\sqrt{\frac{K_{1}}{K_1+1}}\tilde{\mathbf{h}}_k  + \sqrt{\frac{1}{K_1+1}}\hat{\mathbf{h}}_k\right) \in \mathbb{C}^{M\times1},
\end{equation}
where $K_1$ is the Rician factor, $\tilde{\mathbf{h}}_k=\bm{\rho}(\mathbf{U})\rho^H(\mathbf{t}_k)$ is the LoS component of $\mathbf{h}_k$, and $\hat{\mathbf{h}}_k$ is the NLoS component, whose elements are i.i.d. complex Gaussian random variables with the zero mean and unit variance.

Similarly, $\sqrt{\beta_0}$ represents the the large-scale fading from the FAR-B to BS, which can be given as,
\begin{equation}
    \sqrt{\beta_0} = \frac{\mu_0}{D_0^l},
\end{equation}
where $\mu_0$ is the i.i.d. Gamma random variable, and $D_0 = ||\mathbf{o}_b-\mathbf{o}_r||_2$ is the distance between the FAR-B and BS.

Then, the channel from the FAR-B to BS is 
\begin{equation}
    \sqrt{\beta_0}\mathbf{H}_0 = \sqrt{\beta_0}\left(\sqrt{\frac{K_{0}}{K_0+1}}\tilde{\mathbf{H}}_0  + \sqrt{\frac{1}{K_0+1}}\hat{\mathbf{H}}_0\right) \in \mathbb{C}^{N\times M},
\end{equation}
where $K_0$ is the Rician factor, $\tilde{\mathbf{H}}_0 = \bm{\rho}(\mathbf{R})\bm{\rho}^H(\mathbf{B}) $ is the LoS component of $\mathbf{H}_0$, and $\hat{\mathbf{H}}_0$ is the NLoS component  with complex Gaussian random distributed i.i.d. entries with zero means and unit variances.

\subsection{Communication Model}
The received signal at the BS can be given as
\begin{equation}
 \mathbf{y}_{BS}  = \sum_{k=1}^K \sqrt{p_k} \sqrt{\beta_k\beta_0} \underbrace{\mathbf{H}_0 \bm{\Theta}}_{\triangleq\mathbf{H}} \mathbf{h}_k s_k + \sqrt{\beta_0}\mathbf{H} \mathbf{n}_u + \mathbf{n}_r,
\end{equation}
where $p_k$ is the transmit power of user $k$, satisfying $p_k \leq P_{k,\mathrm{max}}$, and $s_k$ represents the information symbol for user $k$, which is modeled as $\mathcal{CN}(0,1)$. $\mathbf{n}_u \sim \mathcal{CN}(\mathbf{0},\sigma_u^2 \mathbf{I}_M)$ is the additive white Gaussian noise (AWGN) at the FAR-U, and $\mathbf{n}_r \sim \mathcal{CN}(\mathbf{0},\sigma_r^2 \mathbf{I}_M)$ is the AWGN at the BS.

The achievable rate of user $k$ can be given by $R_k =  \rm{log} (1 + \gamma_k)$, where $\gamma_k$ is the SINR, given as
\begin{equation}\label{SINR}
    \gamma_k = \dfrac{p_k\beta_0\beta_k|\bm{\omega}_k^H\mathbf{H} \mathbf{h}_k|^2}{\sigma_r^2+\beta_0\sigma_u^2||\bm{\omega}_k^H\mathbf{H}||_2^2  +\sum_{i=1,i\neq k}^{K} p_i\beta_0\beta_i|\bm{\omega}_k^H\mathbf{H} \bm{h}_i|^2}, 
\end{equation}
where $\bm{\omega}_k$ is the beamforming vector for user $k$.

\subsection{Problem Formulation}
Given the considered system model, we assume that the reference points of users and the BS are generally fixed. Then, our objective is to jointly optimize the reference points of the FAR, i.e., $\mathbf{o}_u$ and $\mathbf{o}_b$,  the FA-position matrices, i.e., $\mathbf{T}$, $\mathbf{U}$, $\mathbf{B}$, $\mathbf{R}$, transmit power vector $\mathbf{p}$, and beamforming design,  represented by $\bm{\Omega}$, so as to maximize the EE under the minimum achievable rate requirements of users, the constraints of FA positions, and the maximum power constraint of each user. Mathematically, the problem for the FAR-assisted uplink communication can be given by:
\begin{subequations}\label{sys1max0}
    \begin{align} 
       & \mathop{\max}_{\mathbf{o}_u, \mathbf{o}_b, \mathbf{T}, \mathbf{U}, \mathbf{B}, \mathbf{R},\mathbf{p}, \bm{\Omega}} \quad \dfrac{B\sum_{k=1}^K\mathrm{log}_2(1+\gamma_k)}{\sum_{k=1}^Kp_k + P_B + P_F} ,\tag{\ref{sys1max0}}\\
         \textrm{s.t.} \qquad 
         &B\mathrm{log}_2(1+\gamma_k) \geq R_{k,\mathrm{min}},\forall k \in \mathcal{K}, \\
   & 0\leq p_k\leq P_{k,\mathrm{max}},\forall k \in \mathcal{K},\\
   & \mathbf{t}_k \in \mathcal{C}_{k}, \forall k \in \mathcal{K}\\
   & \mathbf{u}_m \in \mathcal{C}_u, \forall m \in \mathcal{M}, \\
   & \mathbf{b}_m \in \mathcal{C}_b, \forall m \in \mathcal{M},  \\
   &  \mathbf{r}_{n}\in \mathcal{C}_r, \forall n \in \mathcal{N},\\
   & ||\mathbf{u}_{m}-\mathbf{u}_l||_2 \geq d_{\mathrm{min}}, \forall m,l \in \mathcal{M} , m \neq l, \\
   & ||\mathbf{b}_{m}-\mathbf{b}_l||_2 \geq d_{\mathrm{min}}, \forall m,l \in \mathcal{M} , m \neq l, \\
   & ||\mathbf{r}_{n}-\mathbf{r}_l ||_2 \geq d_{\mathrm{min}} , \forall n,l \in \mathcal{N} , n \neq l, \\
   &  0 \leq x_{u_0} \leq L-C, 0 \leq z_{u_0} \leq H-C,\\
   & 0 \leq x_{b_0} \leq L-C, 0 \leq z_{b_0} \leq H-C,
    \end{align}
\end{subequations}
where $\mathcal{M} = \{1,\dots,M\}$, $\mathcal{N} = \{1,\dots,N\}$, $\mathcal{K} = \{1,\dots,K\}$, $\mathbf{p} = [p_1,\dots,p_K]^T$, $\bm{\Omega} = [\bm{\omega}_1,\dots,\bm{\omega}_K]$, 
$B$ is the bandwidth of the channel, $R_{k,\mathrm{min}}$ is the minimum rate requirement of user $k$, $P_{k,\mathrm{max}}$ is the maximum transmit power of user $k$, $P_B$ is the circuit power consumption of the BS, $P_F$ is the
power consumption of FAR, and $d_{\mathrm{min}}$ is the minimum required distance between FAs to avoid mutual coupling\cite{ye2023fluid}. The minimum rate constraints are given by (\ref{sys1max0}a). (\ref{sys1max0}b) represents the minimum power constraints. (\ref{sys1max0}c)-(\ref{sys1max0}f) constrain that the FAs can only move within the corresponding regions, and (\ref{sys1max0}g)-(\ref{sys1max0}i) are the minimum distance constraints required to avoid mutual coupling between FAs. The feasible location region of the reference points $\mathbf{o}_u$ and $\mathbf{o}_b$ can be presented by constraints (\ref{sys1max0}j) and (\ref{sys1max0}k), respectively.

\section{Energy Efficiency Optimization\\ with Multiple Users}
In this section, we first decompose the problem into three sub-problems, i.e., the large-scale fading optimization sub-problem, the small-scale fading optimization sub-problem, and the joint optimization of power control and beamforming design sub-problem. To solve these problems, we propose Algorithm~\ref{Algorithm1}-Algorithm~\ref{Algorithm3}, and by iteratively solving them, problem \eqref{sys1max0} can be solved as shown in Algorithm~\ref{Algorithm4}.



\subsection{Large-scale Fading Optimization}
Large-scale fading, represented by $\beta_0$ and $\beta_k$, is mainly determined by the distance between the transmitter and the receiver. To find the optimal $\beta_0^{\mathrm{opt}}$ and $\beta_k^{\mathrm{opt}}$, we need to optimize the reference points of the FAR and keep the relative positions between FAs and the reference points fixed. Hence, the sub-problem can be formulated as 
\begin{subequations}\label{sub1}
    \begin{align} 
\mathop{\max}_{\mathbf{o}_u, \mathbf{o}_b \mathbf{U}, \mathbf{B}}\;&  
   B\sum_{k=1}^K\mathrm{log}_2(1+\gamma_k)
\tag{\ref{sub1}}\\
         \textrm{s.t.} \qquad 
   & \mathbf{u}_m^{\mathrm{cor}} - \mathbf{o}_u^{\mathrm{opt}} = \mathbf{u}_m - \mathbf{o}_u, \forall m \in \mathcal{M}, \\
   & \mathbf{b}_m^{\mathrm{cor}} - \mathbf{o}_b^{\mathrm{opt}} = \mathbf{b}_m - \mathbf{o}_b, \forall m \in \mathcal{M},\\
   \nonumber
   & (\ref{sys1max0}\mathrm{a}), (\ref{sys1max0}\mathrm{j}), (\ref{sys1max0}\mathrm{k}),
    \end{align}
\end{subequations}
where $\mathbf{o}_u^{\mathrm{opt}}$ and $\mathbf{o}_b^{\mathrm{opt}}$ are the optimal reference points of the FAR-U and the FAR-B, respectively, and $\mathbf{u}_m^{\mathrm{cor}}$ and $\mathbf{b}_m^{\mathrm{cor}}$ are the corresponding positions of the $m$-th FA at the FAR-U and the FAR-B, respectively. Since the positions of reference points $\mathbf{o}_u$ and $\mathbf{o}_b$ are uncoupled, we optimize them separately.

\subsubsection{Reference point optimization for $\mathbf{o}_b$} When the FA-position matrices $\mathbf{T}$ and $\mathbf{R}$, the transmit power vector $\mathbf{p}$, and the beamforming design matrix $\bm{\Omega}$ are given,
\eqref{SINR} can be written as
\begin{equation}\label{SINRNew}
    \gamma_k = \frac{\beta_kA_{kk}}{\beta_0^{-1}\sigma_r^2+ \sigma_u^2A_{k0}+\sum_{i=1,i\neq k}^K \beta_i A_{ki}},
\end{equation}
where $A_{kk} = p_k|\bm{\omega}_k^H\mathbf{Hh}_k|^2$, $A_{ki} = p_i|\bm{\omega}_k^H\mathbf{Hh}_i|^2$, and $A_{k0} =||\bm{\omega}_k^H\mathbf{H}||_2^2$ are constants. According to \eqref{SINRNew}, the optimal $\beta_0$ and corresponding reference point $\mathbf{o}_u^{\mathrm{opt}}$ can be given as 
\begin{equation}\label{Optimalob}
    \beta_0^{\mathrm{opt}} = \frac{\mu_0^2}{||\mathbf{o}_b^{{\mathrm{opt}}}-\mathbf{o}_r||_2^l}=  \frac{\mu_0^2}{(\sqrt{(L-C)^2+Y_0^2+(H-C)^2})^l},
\end{equation}
where $\mathbf{o}_b^{{\mathrm{opt}}} = [L-C,Y_0,H-C]$.

\subsubsection{Reference point optimization for $\mathbf{o}_u$} 
After obtaining the optimal solution to optimization for $\mathbf{o}_b$ and keeping the relative position relationships described in (\ref{sub1}a) and (\ref{sub1}b), we introduce the slack variable vector $\bm{\eta}=[\eta_1,\dots,\eta_K]^T$, and the large-scale fading sub-problem can be reformulated as
\begin{subequations}\label{sub11}
    \begin{align} 
\mathop{\max}_{\mathbf{o}_u,\mathbf{U},}\bm{\eta}\;&  
   B\sum_{k=1}^K\mathrm{log}_2(1+\eta_k)
\tag{\ref{sub11}}\\
         \textrm{s.t.} \qquad 
          &\eta_k\geq 2^{\frac{R_{\mathrm{k,min}}}{B}}-1, \forall k \in \mathcal{K}, \\
          & \eta_k \leq \frac{\beta_kA_{kk}}{\sum_{i=1,i\neq k}^K \beta_i A_{ki}+ \hat{\sigma}^2},\forall k \in \mathcal{K},\\
          \nonumber
   & (\ref{sys1max0}\mathrm{j}),(\ref{sub1}\mathrm{a}),
    \end{align}
\end{subequations}
where $\hat{\sigma}^2 = \beta_0^{-1}\sigma_r^2+ \sigma_u^2A_{k0}$.

The nonconvexity of problem \eqref{sub11} is caused by the constraint $(\ref{sub11}\mathrm{b})$. To handle the non-convex constraint $(\ref{sub11}\mathrm{b})$, we introduce slack variable $\zeta_k$, and constraint (\ref{sub11}b) is equivalent to 
\begin{equation}\label{sub11b1}
    \sum_{i=1,i\neq k}^K \frac{\mu_i}{||\mathbf{o}_u-\mathbf{t}_i||_2^l} A_{ki}   \leq \zeta_k- \hat{\sigma}^2, \forall k \in \mathcal{K},
\end{equation}
and
\begin{align}\label{sub11b2}
\nonumber
    \frac{\mu_k}{||\mathbf{o}_u-\mathbf{t}_k||_2^l}&A_{kk} \geq \eta_k \zeta_k\\
    &= \frac{1}{4}\left[(\eta_k+\zeta_k)^2-(\eta_k-\zeta_k)^2\right], \forall k \in \mathcal{K.}
\end{align}

Given the fact that function $\frac{1}{||\mathbf{o}_u-\mathbf{t}_k||_2^l}$ is either convex nor concave with respect to $\mathbf{o}_u$ with positive $l$, we adopt the SCA method to handle the nonconvexity of \eqref{sub11b1} and \eqref{sub11b2}. For \eqref{sub11b1}, it can be approximated as 
\begin{align}\label{sub11b3}
\nonumber
  & \sum_{i=1,i \neq k}^K \left[\frac{\mu_i A_{ki}}{||\mathbf{o}_u^{(n)}-\mathbf{t}_i||_2^l} -\frac{\mu_i l A_{ki}(\mathbf{o}_u^{(n)}-\mathbf{t}_i)^T(\mathbf{o}_u-\mathbf{o}_u^{(n)})}{||\mathbf{o}_u^{(n)}-\mathbf{t}_i||_2^{l+2}}\right] \\
   &+\hat{\sigma}^2\leq \zeta_k, \forall k \in \mathcal{K},
\end{align}
where the left hand side term is the first-order Taylor expansion of $\frac{\mu_i A_{ki}}{||\mathbf{o}_u-\mathbf{t}_i||_2^l}$ at point $\mathbf{o}_u^{(n)}$ and the subscript means the value of the variable in the $n$th iteration.

Similarly, to deal with the nonconvexity of \eqref{sub11b2}, we derive the first-order Taylor expansion of the terms on both hand sides, which can be given as
\begin{align}\label{sub11b4}
\nonumber
    &\frac{\mu_kA_{kk}}{||\mathbf{o}_u^{(n)}-\mathbf{t}_k||_2^l}-\frac{\mu_k l A_{kk}(\mathbf{o}_u^{(n)}-\mathbf{t}_k)^T(\mathbf{o}_u-\mathbf{o}_u^{(n)})}{||\mathbf{o}_u^{(n)}-\mathbf{t}_k||_2^{l+2}}\geq \\
    \nonumber
    & \frac{1}{4}\Big[( \eta_k + \zeta_k)^2 - ( \eta_k^{(n)} - \zeta_k^{(n)})^2+2(\eta_k^{(n)}-\zeta_k^{(n)})(\zeta_k-\zeta_k^{(n)}) \\
    &-2(\eta_k^{(n)} - \zeta_k^{(n)})(\eta_k-\eta_k^{(n)})\Big], \forall k \in \mathcal{K}.
\end{align}

With the above approximations, the non-convex problem in \eqref{sub11} can be reformulated as the approximated convex problem:
\begin{subequations}\label{sub12}
    \begin{align} 
\mathop{\max}_{\mathbf{o}_u,\mathbf{U},}\bm{\eta},\bm{\zeta}\;&  
   B\sum_{k=1}^K\mathrm{log}_2(1+\eta_k)
\tag{\ref{sub12}}\\
         \textrm{s.t.} \qquad 
         \nonumber
          &(\ref{sys1max0}\mathrm{j}),(\ref{sub1}\mathrm{a}),(\ref{sub11}\mathrm{a}),\eqref{sub11b3}, \eqref{sub11b4},
    \end{align}
\end{subequations}
where $\bm{\zeta} = [\zeta_1,\dots,\zeta_K]^T$. 

Thus, problem \eqref{sub11} can be tackled by utilizing SCA method, in which the approximated convex problem \eqref{sub12} is derived in each iteration.Moreover, for the original large-scale fading optimization sub-problem \eqref{sub1}, the details of solving it are presented in Algorithm~\ref{Algorithm1}.

\begin{algorithm}[t]
 	\caption{SCA method for Problem \eqref{sub1}}
 	  \label{Algorithm1}
    Initialize parameters $\mathbf{o}_u^{(0)}$, $\mathbf{o}_b^{(0)}$, $\mathbf{U}^{(0)}$, $\mathbf{B}^{(0)}$. \\
    Set iteration number $n = 1$.
    
    \BlankLine
    Obtain $\mathbf{o}_b^{\mathrm{opt}}$ by \eqref{Optimalob} and compute the $\mathbf{B}^{\mathrm{cor}}$ with obeying constraint (\ref{sub1}{b})\\ 
    \While{\eqref{sub12} does not converge}{
    Solve problem \eqref{sub12} with SCA method and obtain the output $\mathbf{o}_u^{(n)}$ {and $\mathbf{U}^{(n)}$} for this problem at this iteration.

   Set $n=n+1$.
    } 	
 \end{algorithm}

\subsection{Small-scale Fading Optimization}
When given reference points $\mathbf{o}_u$ and $\mathbf{o}_b$, transmit power vector $\mathbf{p}$, and beamforming design matrix $\bm{\Omega}$, problem \eqref{sys1max0} can be reformulated as

\begin{subequations}\label{sub2}
    \begin{align} 
        \mathop{\max}_{\mathbf{T}, \mathbf{U}, \mathbf{B}, \mathbf{R}} \quad &B\sum_{k=1}^K\mathrm{log}_2(1+\gamma_k),\tag{\ref{sub2}}\\
         \textrm{s.t.} \ 
        \nonumber
  &{(\ref{sys1max0}\mathrm{a}),}(\ref{sys1max0}\mathrm{c})-(\ref{sys1max0}\mathrm{i}),
    \end{align}
\end{subequations}
where $\gamma_k$ can be rewritten as 
\begin{equation}
    {\gamma}_k = \frac{|\bm{\omega}_k^H\mathbf{H}\mathbf{h}_k|^2}{\hat{\sigma}_{rk}^2+\hat{\sigma}_{uk}^2||\bm{\omega}_k^H{\mathbf{H}}||_2^2+\sum_{i=1,i\neq k}^K \hat{p}_{ik} |\bm{\omega}_k^H{\mathbf{Hh}}_i|^2},
\end{equation}
${\hat{\sigma}_{rk}^2} = \sigma_r^2(p_k\beta_0\beta_k)^{-1}$, ${\hat{\sigma}_{uk}^2} = \sigma_{u}^2(p_k\beta_k)^{-1}$, and $\hat{p}_{ik} = p_i\beta_i(p_k\beta_k)^{-1}$.

\subsubsection{Optimization of FA position of user $k$} With given $\mathbf{U}$, $\mathbf{B}$, $\mathbf{R}$, and $\{\mathbf{t}_{i},i \neq k\}_{i=1}^K$, problem \eqref{sub2} can be given as 
\begin{subequations}\label{sub21}
    \begin{align} 
    &\mathop{\max}_{\mathbf{t}_k, \bm{\tau}^{t_k}} \quad B\sum_{k=1}^K\mathrm{log}_2(1+\tau_k^{t_k}),\tag{\ref{sub21}}\\
    \textrm{s.t.} \ 
    &\tau_k^{t_k}\geq 2^{\frac{R_{k,\mathrm{min}}}{B}}-1,\forall k \in \mathcal{K} ,\\
    & \tau_k^{t_k} \leq \frac{|\hat{\bm{\omega}}_k^H\mathbf{h}_k|^2}{\hat{\sigma}_k^2+\sum_{i=1,i\neq k}^K \hat{p}_{ik} |\hat{\bm{\omega}}_k^H{\mathbf{h}}_i|^2}, \forall k \in \mathcal{K},\\
     & \mathbf{t}_k \in \mathcal{C}_{k},
    \end{align}
\end{subequations}
where $\bm{\tau}^{t_k} = [\tau_1^{t_k},\dots,\tau_K^{t_k}]^T$ is the slack variable vector, $\hat{\bm{\omega}}_k = \mathbf{H}^H \bm{\omega}_k$, and $\hat{\sigma}_k^2 = \hat{\sigma}_{rk}^2+\hat{\sigma}_{uk}^2||\bm{\omega}_k^H{\mathbf{H}}||_2^2$.

For any given $k$, constraint (\ref{sub21}b) is equivalent to
\begin{equation}\label{sub21b1}
\sqrt{(\hat{\sigma}_k^2+\sum_{i=1,i\neq k}^K \hat{p}_{ik} |\hat{\bm{\omega}}_k^H{\mathbf{h}}_i|^2) \tau_k^{t_k}} \leq \mathrm{Re}(\hat{\bm{\omega}}_k^H\mathbf{h}_k),
\end{equation}
whose left hand side term is concave with respect to $\tau_k^{t_k}$ and right hand side term is linear with respect to $\mathbf{h}_k$. Therefore, we expand the left hand side term of \eqref{sub21b1} with its first-order Taylor expansion as
\begin{align}\label{sub21b2}
\nonumber
    &\mathrm{Re}(\hat{\bm{\omega}}_k^H\mathbf{h}_k) \geq\\
    &{\sqrt{ A_k(\tau_k^{t_k})^{(n)}}+\frac{\sqrt{A_k}(\tau_k^{t_k}-(\tau_k^{t_k})^{(n)})}{2\sqrt{ (\tau_k^{t_k})^{(n)}}}}, \forall k \in \mathcal{K},
\end{align}
where $A_k = \hat{\sigma}_k^2+\sum_{i=1,i\neq k}^K \hat{p}_{ik} |\hat{\bm{\omega}}_k^H{\mathbf{h}}_i|^2$.

As a result, problem {\eqref{sub21}} can be reformulated as
\begin{subequations}\label{sub211}
    \begin{align} 
    \mathop{\max}_{\mathbf{t}_k, \bm{\tau}^{t_k}} \quad &B\sum_{k=1}^K\mathrm{log}_2(1+\tau_k^{t_k}),\tag{\ref{sub211}}\\
    \textrm{s.t.} \ 
    \nonumber
    &(\ref{sub21}\mathrm{a}),(\ref{sub21}\mathrm{c}),\eqref{sub21b2}{,}
    \end{align}
\end{subequations}
{which consists of convex objective function and convex constraints. Hence, convex problem \eqref{sub211} can be solved by existing optimization toolboxes, such as CVX.}

\subsubsection{Optimization of FA position of the FAR-U} When given $\mathbf{T}$, $\mathbf{B}$, $\mathbf{R}$, and $\{\mathbf{u}_{i}, i \neq m\}_{i=1}^M$, problem \eqref{sub2} can be given as 
\begin{subequations}\label{sub22}
    \begin{align} 
    &\mathop{\max}_{\mathbf{u}_m, \bm{\tau}^{u_m},\bm{\upsilon}^{u_m}} \quad B\sum_{k=1}^K\mathrm{log}_2(1+\tau_k^{u_m}),\tag{\ref{sub22}}\\
    \textrm{s.t.} \ 
    &\tau_k^{u_m}\geq 2^{\frac{R_{k,\mathrm{min}}}{B}}-1,\forall k \in \mathcal{K} ,\\
    & \upsilon_k^{u_m}\tau_k^{u_m} \leq |{\bm{\omega}}_k^H \mathbf{H}_0 \bm{\Theta}\mathbf{h}_k|^2, \forall k \in \mathcal{K},\\
    \nonumber
    &\hat{\sigma}_{rk}^2+\hat{\sigma}_{uk}^2||\bm{\omega}_k^H\mathbf{H}_0 \bm{\Theta}||_2^2+\sum_{i=1,i\neq k}^K \hat{p}_{ik} |{\bm{\omega}}_k^H \mathbf{H}_0 \bm{\Theta}{\mathbf{h}}_i|^2\\
    &\leq \upsilon_k^{u_m}, \forall k \in \mathcal{K},\\
   & \mathbf{u}_m \in \mathcal{C}_u,\\
   \nonumber
   &(\ref{sys1max0}\mathrm{g}),
    \end{align}
\end{subequations}
where $\tau_k^{u_m}$ and $\upsilon_k^{u_m}$ are slack variables.

{The nonconvexity of problem \eqref{sub22} lies in the non-convex constraints (\ref{sub22}b) and (\ref{sub22}c).} 
To handle the nonconvexity of (\ref{sub22}b), we first denote {that $\mathbf{h}_k^{u}(\mathbf{u}_m)=\bm{\Theta\mathbf{h}_k}$, and the right hand side term of constraint (\ref{sub22}b) can be written as}
{
\begin{equation}\label{sub22b0}
    |\bm{\omega}_k^H \mathbf{H}_0 \bm{\Theta}\mathbf{h}_k|^2
    =\mathbf{h}_k^{u}(\mathbf{u}_m)^H\hat{\mathbf{H}}_0\mathbf{h}_k^{u}(\mathbf{u}_m),
\end{equation}
}
where $\hat{\mathbf{H}}_0=\mathbf{H}_0^H\bm{\omega}_k\bm{\omega}_k^H \mathbf{H}_0$ is a constant {Hermitian} matrix.

{Although the right hand side term of \eqref{sub22b0} is neither convex nor concave with respect to $\mathbf{u}_m$, it is convex with respect to $\mathbf{h}_k^{u}(\mathbf{u}_m)$. Hence, we can obtain the lower bound of \eqref{sub22b0} as}
{
\begin{align}\label{sub22b1}
     \nonumber
    &\mathbf{h}_k^{u}(\mathbf{u}_m)^H\hat{\mathbf{H}}_0\mathbf{h}_k^{u}(\mathbf{u}_m)
    \geq  \mathbf{h}_k^{u}(\mathbf{u}_m^{(n)})^H\hat{\mathbf{H}}_0\mathbf{h}_k^{u}(\mathbf{u}_m^{(n)})\\
    \nonumber
    &+ 2\mathrm{Re}\left\{\mathbf{h}_k^{u}(\mathbf{u}_m^{(n)})^H\hat{\mathbf{H}}_0\left[\mathbf{h}_k^{u}(\mathbf{u}_m)-\mathbf{h}_k^{u}(\mathbf{u}_m^{(n)})\right]\right\}\\
    &=2\underbrace{\mathrm{Re}\left[\mathbf{h}_k^{u}(\mathbf{u}_m^{(n)})^H\hat{\mathbf{H}}_0\mathbf{h}_k^{u}(\mathbf{u}_m)\right]}_{\triangleq\tilde{h}_k^u(\mathbf{u}_m)}-\underbrace{\mathbf{h}_k^{u}(\mathbf{u}_m^{(n)})^H\hat{\mathbf{H}}_0\mathbf{h}_k^{u}(\mathbf{u}_m^{(n)})}_{\mathrm{constant}}.
\end{align}
where $\tilde{h}_k^u(\mathbf{u}_m)$ is linear with respect to $\mathbf{h}_k^u(\mathbf{u}_m)$, but is still not convex or concave to $\mathbf{u}_m$.}

{
\begin{lemma}
    Define $\nabla\tilde{h}_k^u(\mathbf{u}_m)$ as the gradient of $\tilde{h}_k^u(\mathbf{u}_m)$, $\nabla^2\tilde{h}_k^u(\mathbf{u}_m)$ is the Hessian matrix of $\tilde{h}_k^u(\mathbf{u}_m)$, and $\iota_k^u$ is a positive real number satisfying $\iota_k^u\mathbf{I}_3-\nabla^2\tilde{h}_k^u(\mathbf{u}_m)\succeq0$, whose derivations are provided in Appendix~\ref{Proof2}. Then, a global lower bound of $\tilde{h}_k^u(\mathbf{u}_m)$ is $\tilde{\tilde{h}}_k^u(\mathbf{u}_m)$ is obtained as
    \begin{align}\label{sub22b2}
    \nonumber
     &\tilde{\tilde{h}}_k^u(\mathbf{u}_m) =   
     -\frac{1}{2}\iota_k^u \mathbf{u}_m^T\mathbf{u}_m+\left(\nabla\tilde{h}_k^u(\mathbf{u}_m^{(n)})+\iota_k^u\mathbf{u}_m^{(n)}\right)^T\mathbf{u}_m\\
     &-\left(\nabla\tilde{h}_k^u(\mathbf{u}_m^{(n)})+\frac{1}{2}\iota_k^u\mathbf{u}_m^{(n)}\right)^T\mathbf{u}_m^{(n)}+\tilde{h}_k^u(\mathbf{u}_m^{(n)}).
    \end{align}
\end{lemma}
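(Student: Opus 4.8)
The plan is to recognize $\tilde{\tilde{h}}_k^u(\mathbf{u}_m)$ as a quadratic minorizer of $\tilde{h}_k^u(\mathbf{u}_m)$ anchored at the current iterate $\mathbf{u}_m^{(n)}$, and then to verify the global lower-bound property directly. First I would rewrite the right-hand side of \eqref{sub22b2} in the more transparent compact form
\begin{equation}
\tilde{\tilde{h}}_k^u(\mathbf{u}_m) = \tilde{h}_k^u(\mathbf{u}_m^{(n)}) + \nabla\tilde{h}_k^u(\mathbf{u}_m^{(n)})^T(\mathbf{u}_m-\mathbf{u}_m^{(n)}) - \tfrac{1}{2}\iota_k^u||\mathbf{u}_m-\mathbf{u}_m^{(n)}||_2^2,
\end{equation}
which follows by expanding the squared norm and re-collecting the linear and constant terms. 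This algebraic identity exposes $\tilde{\tilde{h}}_k^u$ as the first-order Taylor polynomial of $\tilde{h}_k^u$ at $\mathbf{u}_m^{(n)}$ with a concave quadratic regularizer $-\tfrac{1}{2}\iota_k^u||\cdot||_2^2$ appended, which is precisely the structure that makes it a concave surrogate suitable for the maximization in \eqref{sub22}.

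Next I would introduce the gap function $\phi_k(\mathbf{u}_m) \triangleq \tilde{h}_k^u(\mathbf{u}_m) - \tilde{\tilde{h}}_k^u(\mathbf{u}_m)$ and establish three facts in order: (i) $\phi_k(\mathbf{u}_m^{(n)}) = 0$, since the constant term of $\tilde{\tilde{h}}_k^u$ matches $\tilde{h}_k^u(\mathbf{u}_m^{(n)})$; (ii) $\nabla\phi_k(\mathbf{u}_m^{(n)}) = \mathbf{0}$, since the linear term of $\tilde{\tilde{h}}_k^u$ reproduces the gradient of $\tilde{h}_k^u$ at $\mathbf{u}_m^{(n)}$ and the quadratic term has vanishing gradient there; and (iii) $\nabla^2\phi_k(\mathbf{u}_m) = \nabla^2\tilde{h}_k^u(\mathbf{u}_m) + \iota_k^u\mathbf{I}_3 \succeq \mathbf{0}$ for every feasible $\mathbf{u}_m$, because the Hessian of the appended quadratic is $-\iota_k^u\mathbf{I}_3$. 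Fact (iii) makes $\phi_k$ convex, so by (i)--(ii) the point $\mathbf{u}_m^{(n)}$ is a global minimizer of $\phi_k$ attaining value $0$; hence $\phi_k(\mathbf{u}_m)\geq 0$, i.e. $\tilde{h}_k^u(\mathbf{u}_m)\geq\tilde{\tilde{h}}_k^u(\mathbf{u}_m)$ everywhere, with equality and tangency at $\mathbf{u}_m^{(n)}$. Equivalently, I could invoke the second-order Taylor expansion with integral remainder and lower-bound the remainder quadratic form using $\nabla^2\tilde{h}_k^u\succeq-\iota_k^u\mathbf{I}_3$.

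The crux is fact (iii), which asks for $\nabla^2\tilde{h}_k^u(\mathbf{u}_m)\succeq-\iota_k^u\mathbf{I}_3$; I expect this, rather than the bookkeeping in (i)--(ii), to be the main obstacle, and it is exactly what Appendix~\ref{Proof2} is meant to furnish, namely a finite $\iota_k^u$ dominating the Hessian over the whole movable region $\mathcal{C}_u$. I would obtain such a constant by bounding the spectral norm $||\nabla^2\tilde{h}_k^u(\mathbf{u}_m)||_2$ uniformly, exploiting that $\tilde{h}_k^u$ is a real-linear functional of $\mathbf{h}_k^u(\mathbf{u}_m)=\bm{\Theta}\mathbf{h}_k$ whose entries are bounded combinations of complex exponentials in the coordinates of $\mathbf{u}_m$, so that each second-order partial derivative admits a uniform bound; a Frobenius- or Gershgorin-type majorization then yields an explicit $\iota_k^u$. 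Any $\iota_k^u \geq \sup_{\mathbf{u}_m\in\mathcal{C}_u}||\nabla^2\tilde{h}_k^u(\mathbf{u}_m)||_2$ forces $-\iota_k^u\mathbf{I}_3\preceq\nabla^2\tilde{h}_k^u\preceq\iota_k^u\mathbf{I}_3$, which delivers both the stated relation $\iota_k^u\mathbf{I}_3-\nabla^2\tilde{h}_k^u\succeq\mathbf{0}$ and the sign in fact (iii) needed for the minorization, so the global lower bound holds for that choice of $\iota_k^u$.
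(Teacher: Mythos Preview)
Your proposal is correct and follows essentially the same approach as the paper. The paper's Appendix~C applies Taylor's theorem with Lagrange remainder at $\mathbf{u}_m^{(n)}$ and bounds the quadratic-form remainder via $\nabla^2\tilde{h}_k^u(\xi)\succeq -\iota_k^u\mathbf{I}_3$, which is exactly the alternative you mention; your primary gap-function argument ($\phi_k$ convex with a zero-gradient zero at $\mathbf{u}_m^{(n)}$) is just the other standard presentation of the same descent-lemma fact, and your sketch of how Appendix~\ref{Proof2} supplies $\iota_k^u$ (entrywise bounds on the Hessian, then a Frobenius-norm majorization of the spectral norm) matches what the paper actually does.
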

}
{
\begin{proof}
    See Appendix~\ref{proof3}.
\end{proof}
}

Then{,} constraint (\ref{sub22}b) can be approximated as
\begin{align}\label{sub22b3}
    \nonumber
    &{2\tilde{\tilde{h}}_k^u(\mathbf{u}_m)-\mathbf{h}_k^{u}(\mathbf{u}_m^{(n)})^H\hat{\mathbf{H}}_0\mathbf{h}_k^{u}(\mathbf{u}_m^{(n)})}\geq\\
    \nonumber
    &\frac{1}{4}\Big[( \upsilon_k^{u_m} + \tau_k^{u_m})^2 - ( (\upsilon_k^{u_m})^{(n)} - (\tau_k^{u_m})^{(n)})^2\\
    \nonumber
   &+ 2((\upsilon_k^{u_m})^{(n)}- (\tau_k^{u_m})^{(n)})(\tau_k^{u_m}-(\tau_k^{u_m})^{(n)})-\\
    &2((\upsilon_k^{u_m})^{(n)} - (\tau_k^{u_m})^{(n)})(\upsilon_k^{u_m}-(\upsilon_k^{u_m})^{(n)})\Big], \forall k \in \mathcal{K}{.}
\end{align}

{The nonconvexity of  (\ref{sub22}c) is caused by the non-convex terms $||\bm{\omega}_k^H\mathbf{H}_0 \bm{\Theta}||_2^2$ and $|\bm{\omega}_k^H\mathbf{H}_0 \bm{\Theta}\mathbf{h}_i|^2$. Mathematically, $||\bm{\omega}_k^H\mathbf{H}_0 \bm{\Theta}||_2^2$ can be viewed as a special case when the constant part $\mathbf{h}_i$ in $|\bm{\omega}_k^H\mathbf{H}_0 \bm{\Theta}\mathbf{h}_i|^2$ satisfies $\mathbf{h}_i\mathbf{h}_i^H=\mathbf{I}_M$. Hence, we first propose an upper bound, convex with respect to $\mathbf{u}_m$, of $|\bm{\omega}_k^H\mathbf{H}_0 \bm{\Theta}\mathbf{h}_i|^2\triangleq h_i^u(\mathbf{u}_m)$ in Lemma~\ref{lemma2} to handle its nonconvexity, and the convex upper bound of $||\bm{\omega}_k^H\mathbf{H}_0 \bm{\Theta}||_2^2$ can also be easily obtained.
\begin{lemma}\label{lemma2}
    $h_i^u(\mathbf{u}_m)$ can be globally upper bounded by $\tilde{\tilde{h}}_i^u(\mathbf{u}_m)$ defined in \eqref{proof42} in Appendix~\ref{proof4}, which is convex with respect to $\mathbf{u}_m$.
\end{lemma}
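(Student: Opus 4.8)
The plan is to build $\tilde{\tilde{h}}_i^u(\mathbf{u}_m)$ as a convex \emph{quadratic} majorizer of $h_i^u(\mathbf{u}_m)$, mirroring the construction behind Lemma~1 but producing an upper rather than a lower bound. First I would rewrite the objective as the quadratic form $h_i^u(\mathbf{u}_m)=\mathbf{h}_i^{u}(\mathbf{u}_m)^{H}\hat{\mathbf{H}}_0\mathbf{h}_i^{u}(\mathbf{u}_m)$ with $\mathbf{h}_i^{u}(\mathbf{u}_m)=\bm{\Theta}\mathbf{h}_i$ and the constant Hermitian matrix $\hat{\mathbf{H}}_0=\mathbf{H}_0^{H}\bm{\omega}_k\bm{\omega}_k^{H}\mathbf{H}_0$, exactly as in \eqref{sub22b0}. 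A crucial point here is that, since this quadratic form is convex in $\mathbf{h}_i^{u}$, the plain linearization used for the \emph{lower} bound in \eqref{sub22b1} would yield a lower bound again and thus cannot serve as an upper bound. Instead I would majorize $h_i^u$ as a whole by a second-order Taylor expansion about the current iterate $\mathbf{u}_m^{(n)}$ in which the exact curvature is replaced by an upper curvature constant $\iota_i^u$, giving the surrogate recorded in \eqref{proof42},
\[
\tilde{\tilde{h}}_i^u(\mathbf{u}_m)=h_i^u(\mathbf{u}_m^{(n)})+\nabla h_i^u(\mathbf{u}_m^{(n)})^{T}(\mathbf{u}_m-\mathbf{u}_m^{(n)})+\tfrac{1}{2}\iota_i^u\|\mathbf{u}_m-\mathbf{u}_m^{(n)}\|_2^2,
\]
which is convex in $\mathbf{u}_m$ precisely because $\iota_i^u>0$.

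The key steps follow in order. First I would differentiate $h_i^u$ with respect to the real coordinates of $\mathbf{u}_m$, tracking its two dependences: the $m$-th column of $\bm{\Theta}$ through the distance phase $\theta_{pm}=C_1\sec^{\frac{1}{2}}(\gamma)\cos(\tfrac{1}{2}\gamma)\|\mathbf{b}_p-\mathbf{u}_m\|_2$ of \eqref{pahseShift}, and the $m$-th entry of $\mathbf{h}_i$ through the planar-wave phase $\rho(\mathbf{u}_m)=e^{j\frac{2\pi}{\lambda}\bm{\kappa}_{t_i}^{T}(\mathbf{u}_m-\mathbf{o}_u)}$ of \eqref{phaseDifference}; this produces closed forms for $\nabla h_i^u$ and $\nabla^2 h_i^u$. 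Second, I would bound the spectral norm of the Hessian uniformly over the compact region $\mathcal{C}_u$ and choose $\iota_i^u$ to be any constant with $\iota_i^u\ge\sup_{\mathbf{u}_m\in\mathcal{C}_u}\|\nabla^2 h_i^u(\mathbf{u}_m)\|_2$, so that $\iota_i^u\mathbf{I}_3-\nabla^2 h_i^u(\mathbf{u}_m)\succeq0$ throughout. Third, I would invoke the Taylor remainder theorem: for some $\bm{\xi}$ on the segment joining $\mathbf{u}_m^{(n)}$ and $\mathbf{u}_m$, the remainder equals $\tfrac{1}{2}(\mathbf{u}_m-\mathbf{u}_m^{(n)})^{T}\nabla^2 h_i^u(\bm{\xi})(\mathbf{u}_m-\mathbf{u}_m^{(n)})\le\tfrac{1}{2}\iota_i^u\|\mathbf{u}_m-\mathbf{u}_m^{(n)}\|_2^2$, which establishes $h_i^u\le\tilde{\tilde{h}}_i^u$ globally; convexity of the surrogate is immediate since its Hessian is $\iota_i^u\mathbf{I}_3\succ0$.

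Finally, the term $\|\bm{\omega}_k^{H}\mathbf{H}_0\bm{\Theta}\|_2^2$ is the special case obtained by replacing $\mathbf{h}_i\mathbf{h}_i^{H}$ with $\mathbf{I}_M$, as already noted before the lemma, so the identical majorization argument furnishes its convex upper bound as well. The hard part will be producing a clean, explicitly computable $\iota_i^u$: the Hessian blends the second derivatives of the distance-phase terms, which carry factors $1/\|\mathbf{b}_p-\mathbf{u}_m\|_2$ and unit-direction outer products, with those of the planar-wave phase, yielding products of complex exponentials whose real parts must be bounded. These quantities stay finite because the trigonometric factors are at most one and the opposite-side geometry guarantees $\|\mathbf{b}_p-\mathbf{u}_m\|_2\ge W>0$; but converting this into a tight, closed-form curvature constant, via triangle-inequality, Gershgorin, or Frobenius-norm bounds on each second-derivative block together with $\|\hat{\mathbf{H}}_0\|_2$, is the technical heart that Appendix~\ref{proof4} must carry out.
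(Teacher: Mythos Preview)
Your approach is mathematically sound and would yield \emph{a} convex quadratic upper bound, but it is not the construction the paper actually uses in \eqref{proof42}. You propose to majorize the entire function $h_i^u(\mathbf{u}_m)=|\bm{\omega}_k^{H}\mathbf{H}_0\bm{\Theta}\mathbf{h}_i|^2$ in one shot via Taylor's theorem with a uniform Hessian bound; the paper instead writes $\bm{\Theta}\mathbf{h}_i=h_i(m)\bm{\theta}_m+\sum_{l\neq m}h_i(l)\bm{\theta}_l$ (isolating the $m$-th column of $\bm{\Theta}$, the only piece depending on $\mathbf{u}_m$) and expands the quadratic form into three parts: a term $\bm{\theta}_m^{H}(|h_i(m)|^2\hat{\mathbf{H}}_0)\bm{\theta}_m$, a cross term $2\mathrm{Re}(\bm{\theta}_m^{H}\mathbf{c}_i^u)$, and a constant $C_3$. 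The first part is not Taylor-majorized at all; it is bounded by the \emph{constant} $M\lambda_{\max}^{u,i}$ via Rayleigh--Ritz (Lemma~\ref{lemma3}), exploiting $\|\bm{\theta}_m\|_2^2=M$. Only the much simpler scalar $\mathrm{Re}(\bm{\theta}_m^{H}\mathbf{c}_i^u)$ is then treated with the Hessian-bound/Taylor device (Lemma~\ref{lemma4}), so the $\tilde{\tilde{h}}_i^u$ in \eqref{proof42} carries the offset $M\lambda_{\max}^{u,i}+\mathrm{Re}(C_3)$ plus a quadratic majorizer of that cross term, not the expression you wrote. Your route is conceptually more direct and in principle tighter (your surrogate matches both value and gradient at $\mathbf{u}_m^{(n)}$, whereas the Rayleigh--Ritz step is loose), but the Hessian you must bound is substantially heavier, mixing distance-phase second derivatives, planar-wave-phase second derivatives, and all bilinear cross terms of the full quadratic form; the paper's decomposition sacrifices tightness to obtain a Hessian with far fewer terms and the compact curvature constant $\iota_i^u=2C_4\sum_l|c_i^u(l)|$.
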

\begin{proof}
    See Appendix~\ref{proof4}.
\end{proof}

Define $\tilde{\tilde{h}}_0^u(\mathbf{u}_m)$ as the special case where $\mathbf{h}_i\mathbf{h}_i^H$ in $\tilde{\tilde{h}}_i^u(\mathbf{u}_m)$ equals $\mathbf{I}_M$, thus $\tilde{\tilde{h}}_0^u(\mathbf{u}_m)$ is an upper bound of $||\bm{\omega}_k^H\mathbf{H}_0 \bm{\Theta}||_2^2$.
}

Hence, constraint (\ref{sub22}c) can be reformulated as 
\begin{equation}\label{sub22b4}
\hat{\sigma}_{rk}^2+\hat{\sigma}_{uk}^2{\tilde{\tilde{h}}_0^u(\mathbf{u}_m)}+\sum_{i=1,i\neq k}^K \hat{p}_{ik}{\tilde{\tilde{h}}_i^u(\mathbf{u}_m)}\leq \upsilon_k^{u_m},\forall k \in \mathcal{K}.
\end{equation}

For the non-convex constraint (\ref{sys1max0}g), it can be approximated by the first-order Taylor expansion at the given point $\mathbf{u}_{m}^{(n)}$ as
\begin{equation}\label{constraintG}
 \frac{(\mathbf{u}_m^{{(n)}} - \mathbf{u}_l)^T(\mathbf{u}_m -  \mathbf{u}_l)}{||\mathbf{u}_m^{{(n)}} - \mathbf{u}_l||_2}\geq d_{\mathrm{min}},\forall m,l \in \mathcal{M}, m \neq l.
\end{equation}

Therefore, problem \eqref{sub22} can be reformulated as
\begin{subequations}\label{sub221}
    \begin{align} 
    \mathop{\max}_{\mathbf{u}_m, \bm{\tau}^{u_m},\bm{\upsilon}^{b_m}} \quad &B\sum_{k=1}^K\mathrm{log}_2(1+\tau_k^{u_m}),\tag{\ref{sub221}}\\
    \textrm{s.t.} \ 
    \nonumber
   &(\ref{sub22}\mathrm{a}),(\ref{sub22}\mathrm{d}),{\eqref{sub22b1}},{\eqref{sub22b4}},\eqref{constraintG},
    \end{align}
\end{subequations}
{which is convex and can be solved by existing toolboxes.
}

\subsubsection{Optimization of FA position of the FAR-B} With given $\mathbf{T}$, $\mathbf{U}$, $\mathbf{R}$, and $\{\mathbf{b}_{i}, i \neq m\}_{i=1}^M$, problem \eqref{sub2} can be rewritten as 
\begin{subequations}\label{sub23}
    \begin{align} 
    &\mathop{\max}_{\mathbf{b}_m, \bm{\tau}^{b_m},\bm{\upsilon}^{b_m}} \quad B\sum_{k=1}^K\mathrm{log}_2(1+\tau_k^{b_m}),\tag{\ref{sub23}}\\
    \textrm{s.t.} \ 
    &\tau_k^{b_m}\geq 2^{\frac{R_{k,\mathrm{min}}}{B}}-1,\forall k \in \mathcal{K} ,\\
    & \upsilon_k^{b_m}\tau_k^{b_m} \leq |{\bm{\omega}}_k^H \mathbf{H}_0 \bm{\Theta}\mathbf{h}_k|^2, \forall k \in \mathcal{K},\\
    \nonumber
    &\hat{\sigma}_{rk}^2+\hat{\sigma}_{uk}^2||\bm{\omega}_k^H{\mathbf{H}_0} \bm{\Theta}||_2^2\\
    &+\sum_{i=1,i\neq k}^K \hat{p}_{ik} |{\bm{\omega}}_k^H \mathbf{H}_0 \bm{\Theta}{\mathbf{h}}_i|^2\leq \upsilon_k^{b_m}, \forall k \in \mathcal{K},\\
    & \mathbf{b}_m \in \mathcal{C}_b,\\
    \nonumber
   & (\ref{sys1max0}\mathrm{h}),
    \end{align}
\end{subequations}
where $\tau_k^{b_m}$ and $\upsilon_k^{b_m}$ are slack variables.

{
Problem \eqref{sub23} is non-convex, because the constraints (\ref{sub23}b) and (\ref{sub23}c) are not convex. We first deal with the nonconvexity of (\ref{sub23}b). Define $\mathbf{h}_k^b(\mathbf{b}_m)=\mathbf{H}_0\bm{\Theta\mathbf{h}_k}$, where $\mathbf{h}_k$ is a constant vector, we can rewrite the right hand side term of constraint (\ref{sub23}b) as
\begin{equation}\label{sub23b0}
    |{\bm{\omega}}_k^H \mathbf{H}_0 \bm{\Theta}\mathbf{h}_k|^2=\mathbf{h}_k^b(\mathbf{b}_m)^H\hat{\bm{\Omega}}_k\mathbf{h}_k^b(\mathbf{b}_m),
\end{equation}
where $\hat{\bm{\Omega}}_k=\bm{\omega}_k\bm{\omega}_k^H$. 

The right hand side term of \eqref{sub23b0} holds a similar form of \eqref{sub22b0}. Hence, we can find a global lower bound $\tilde{\tilde{h}}_k^b(\mathbf{b}_m)$ of it utilizing the same method used to find $\tilde{\tilde{h}}_k^u(\mathbf{u}_m)$.
}

Then constraint (\ref{sub23}b) can be {reformulated} as 
\begin{align}\label{sub23b1}
\nonumber
    &{\tilde{\tilde{h}}_k^b(\mathbf{b}_m)} \geq\frac{1}{4}\Big[( \upsilon_k^{b_m} + \tau_k^{b_m})^2 - ( (\upsilon_k^{b_m})^{(n)} - (\tau_k^{b_m})^{(n)})^2\\
    \nonumber
    &+ 2((\upsilon_k^{b_m})^{(n)}- (\tau_k^{b_m})^{(n)})(\tau_k^{b_m}-(\tau_k^{b_m})^{(n)})\\
    &-2((\upsilon_k^{b_m})^{(n)} - (\tau_k^{b_m})^{(n)})(\upsilon_k^{b_m}-(\upsilon_k^{b_m})^{(n)})\Big],\forall k \in 
    \mathcal{K}.
\end{align}

For constraint (\ref{sub23}c), the {nonconvexity of it lies in the terms $||{\bm{\omega}}_k^H \mathbf{H}_0 \bm{\Theta}||_2^2\triangleq h_0^b(\mathbf{b}_m)$ and $|{\bm{\omega}}_k^H \mathbf{H}_0 \bm{\Theta}{\mathbf{h}}_i|^2\triangleq h_i^b(\mathbf{b}_m)$. Similar to the derivations in \eqref{proof61} and \eqref{proof62} in Appendix~\ref{proof6}, we can find $\tilde{\tilde{h}}_i^u(\mathbf{b}_m)$ to upper bound $||{\bm{\omega}}_k^H \mathbf{H}_0 \bm{\Theta}||_2^2$, where $\tilde{\tilde{h}}_i^b(\mathbf{b}_m)$ is given as
\begin{align}\label{sub23b2}
   \nonumber
   \tilde{\tilde{h}}_i^b(\mathbf{b}_m) &=  \frac{1}{2}\iota_i^b\mathbf{b}_m^T\mathbf{b}_m+\left(\nabla h_i^b(\mathbf{b}_m^{(n)})-\iota_i^b\mathbf{b}_m^{(n)}\right)^T\mathbf{b}_m\\
    &+\left(\frac{1}{2}\iota_i^b\mathbf{b}_m^{(n)}-\nabla{h}_i^b(\mathbf{b}_m^{(n)})\right)^T\mathbf{b}_m^{(n)}+h_i^b(\mathbf{b}_m^{(n)}),
\end{align}
where $\iota_i^b\geq ||\nabla^2 h_i^b(\mathbf{b}_m)||_2$.

Moreover, by replacing $\mathbf{h}_i\mathbf{h}_i^H$ in $\tilde{\tilde{h}}_i^b(\mathbf{b}_m)$ with $\mathbf{I}_M$, we can obtain the upper bound of $||{\bm{\omega}}_k^H \mathbf{H}_0 \bm{\Theta}||_2^2$ as $\tilde{\tilde{h}}_0^b(\mathbf{b}_m)$.
}

As a result, constraint (\ref{sub23}c) can be {approximated} as 
\begin{equation}\label{sub23b3}
\hat{\sigma}_{rk}^2+\hat{\sigma}_{uk}^2{\tilde{\tilde{h}}_0^b(\mathbf{b}_m)}+\sum_{i=1,i\neq k}^K \hat{p}_{ik}{\tilde{\tilde{h}}_i^b(\mathbf{b}_m)}\leq \upsilon_k^{b_m},\forall k \in \mathcal{K}.
\end{equation}

For the non-convex constraint (\ref{sys1max0}h), it can be approximated by the first-order Taylor expansion at the given point $\mathbf{b}_{m}^{(n)}$ as
\begin{equation}\label{constraintH}
   \frac{(\mathbf{b}_m^{{(n)}} - \mathbf{b}_l)^T(\mathbf{b}_m -  \mathbf{b}_l)}{||\mathbf{b}_m^{{(n)}} - \mathbf{b}_l||_2}\geq d_\mathrm{min}, \forall m , l \in \mathcal{M}, l \neq m.
\end{equation}

Finally, problem \eqref{sub23} can be {approximated by the convex form} as
\begin{subequations}\label{sub231}
    \begin{align} 
    \mathop{\max}_{\mathbf{b}_m, \bm{\tau}^{b_m},\bm{\upsilon}^{b_m}} \quad &B\sum_{k=1}^K\mathrm{log}_2(1+\tau_k^{b_m}),\tag{\ref{sub231}}\\
    \textrm{s.t.} \ 
    \nonumber
    &(\ref{sub23}\mathrm{a}),(\ref{sub23}\mathrm{d}),{\eqref{sub23b1},\eqref{sub23b3}},\eqref{constraintH},
    \end{align}
\end{subequations}
{which can be tackled by the existing convex toolboxes.}

\subsubsection{Optimization of FA position of BS} With given $\mathbf{T}$, $\mathbf{U}$, $\mathbf{B}$, and $\{\mathbf{r}_{i}, i \neq n\}_{i=1}^N$, problem \eqref{sub2} can be reformulated as 
\begin{subequations}\label{sub24}
    \begin{align} 
    &\mathop{\max}_{\mathbf{r}_n, \bm{\tau}^{r_n},\bm{\upsilon}^{r_n}} \quad B\sum_{k=1}^K\mathrm{log}_2(1+\tau_k^{r_n}),\tag{\ref{sub24}}\\
    \textrm{s.t.} \ 
    &\tau_k^{r_n}\geq 2^{\frac{R_{k,\mathrm{min}}}{B}}-1,\forall k \in \mathcal{K} ,\\
    & \upsilon_k^{r_n}\tau_k^{r_n} \leq  {h_k^{r_n}}, \forall k \in \mathcal{K},\\
    &\hat{\sigma}_{rk}^2+\hat{\sigma}_{uk}^2 {h_0^{r_n}}+\sum_{i=1,i\neq k}^K \hat{p}_{ik} { h_i^{r_n}}\leq \upsilon_k^{r_n}, \forall k \in \mathcal{K},\\
     &  \mathbf{r}_{n}\in \mathcal{C}_r,\\
     \nonumber
   &(\ref{sys1max0}\mathrm{i}),
    \end{align}
\end{subequations}
where {$h_l^{r_n}=|{\bm{\omega}}_k^H \mathbf{H}_0 \bm{\Theta}\mathbf{h}_l|^2,l=\{k,i\}$, $h_0^{r_n}=||{\bm{\omega}}_k^H \mathbf{H}_0 \bm{\Theta}||_2^2$}, and $\tau_k^{r_n}$ and $\upsilon_k^{r_n}$ are slack variables.

{
For the non-convex constraints (\ref{sub24}b) and (\ref{sub24}c), we can acquire a lower bound $\tilde{\tilde{h}}_k^{r_n}$ of the right hand side of constraint (\ref{sub24}b) adopting a similar way shown in \eqref{sub22b1} and \eqref{sub22b2}, and obtain $\tilde{\tilde{h}}_i^{r_n}$ and $\tilde{\tilde{h}}_0^{r_n}$ to the upper bound $h_i^{r_n}$ and $h_0^{r_n}$ in the constraint (\ref{sub24}b) with the same method used in \eqref{sub23b2}.
}

Therefore, {constraints} (\ref{sub24}b) {and  (\ref{sub24}c)} can be {respectively} approximated as 
\begin{align}\label{sub24b1}
\nonumber
    &{\tilde{\tilde{h}}_k^{r_n}} \geq\frac{1}{4}\Big[( \upsilon_k^{r_n} + \tau_k^{r_n})^2 - ( (\upsilon_k^{r_n})^{(n)} - (\tau_k^{r_n})^{(n)})^2\\
    \nonumber
    &+ 2((\upsilon_k^{r_n})^{(n)}- (\tau_k^{r_n})^{(n)})(\tau_k^{r_n}-(\tau_k^{r_n})^{(n)})\\
    &-2((\upsilon_k^{r_n})^{(n)} - (\tau_k^{r_n})^{(n)})(\upsilon_k^{r_n}-(\upsilon_k^{r_n})^{(n)})\Big], \forall k \in \mathcal{K},
\end{align}
{and}
\begin{align}\label{sub24b2}
\hat{\sigma}_{rk}^2+\hat{\sigma}_{uk}^2 {\tilde{\tilde{h}}_0^{r_n}}+\sum_{i=1,i\neq k}^K \hat{p}_{ik} { \tilde{\tilde{h}}_i^{r_n}}\leq \upsilon_k^{r_n}, \forall k \in \mathcal{K}{.}
\end{align}

Besides, non-convex constraint (\ref{sys1max0}i) can be approximated by its first-order Taylor expansion at the given point $\mathbf{r}_{n}^{(n)}$ as
\begin{equation}\label{constraintI}
  \frac{(\mathbf{r}_n^{(n)} - \mathbf{r}_l)^T(\mathbf{r}_n -  \mathbf{r}_l)}{||\mathbf{r}_n^{(n)} - \mathbf{r}_l||_2}\geq d_0, \forall n,l \in \mathcal{N}, n\neq l.
\end{equation}

{Accordingly}, problem \eqref{sub24} can be approximated as 
\begin{subequations}\label{sub241}
    \begin{align} 
    \mathop{\max}_{\mathbf{r}_n, \bm{\tau}^{r_n},\bm{\upsilon}^{r_n}} \quad &B\sum_{k=1}^K\mathrm{log}_2(1+\tau_k^{r_n}),\tag{\ref{sub241}}\\
    \textrm{s.t.} \ 
    \nonumber
    &(\ref{sub24}\mathrm{a}),(\ref{sub24}\mathrm{d}),{\eqref{sub24b1}-}\eqref{constraintI}.
    \end{align}
\end{subequations}

{Since problems \eqref{sub211}, \eqref{sub221}, \eqref{sub231}, and \eqref{sub241} are all convex, they can be solved with existing tool box, respectively.} By alternately {solving} the four sub-problems, we can obtain the {sub-optimal} solution {to} the small-scale fading optimization with the proposed alternating optimization (AO) algorithm shown in Algorithm~\ref{Algorithm2}.

 \begin{algorithm}[t]
 	\caption{AO for Problem \eqref{sub2}}
 	  \label{Algorithm2}
    Initialize parameters $\mathbf{T}^{(0)}$, $\mathbf{R}^{(0)}$, $\mathbf{U}^{(0)}$, $\mathbf{B}^{(0)}$. \\
    Set iteration number $n = 1$, and $i=1$.
    \BlankLine
    \While{\eqref{sub2} does not converge}{
    \While{$i \neq K$}{
    Solve problem \eqref{sub211} with given $\mathbf{U}$, $\mathbf{B}$, $\mathbf{R}$, and $\{\mathbf{t}_{k},k\neq i\}_{k=1}^K$ to update $\mathbf{t}_i$. Set $i=i+1$. 
    }
     \While{$i \neq M$}{
    
    Solve problem \eqref{sub221} with given $\mathbf{T}$, $\mathbf{B}$, $\mathbf{R}$, and $\{\mathbf{u}_{m},m\neq i\}_{m=1}^M$ to update $\mathbf{u}_i$. Set $i = i + 1$. 
    }
    
     \While{$i \neq M$}{
    
    Solve problem \eqref{sub231} with given $\mathbf{T}$, $\mathbf{U}$, $\mathbf{R}$, and $\{\mathbf{b}_{m},m\neq i\}_{m=1}^M$ to update $\mathbf{b}_i$. Set $i = i + 1$.
    }  
     \While{$i \neq N$}{
    Solve problem \eqref{sub241} with given $\mathbf{T}$, $\mathbf{U}$, $\mathbf{B}$, and $\{\mathbf{r}_{n},n\neq i\}_{n=1}^N$ to update $\mathbf{r}_i$. Set $i = i + 1$.
    }
    Set iteration number $n=n+1$. 
    
    Reset $i=1$.
    }
    
 \end{algorithm}
 
\subsection{Joint Optimization of Power Control and Beamforming
Design}
When the positions of FAs are fixed, problem \eqref{sub3}
can be formulated as
\begin{subequations}\label{sub3}
    \begin{align} 
       & \mathop{\max}_{\mathbf{p}, \bm{\Omega},\bm{\psi} }\quad \dfrac{B\sum_{k=1}^K\mathrm{log}_2(1+\psi_k)}{\sum_{k=1}^Kp_k + P_B + P_F} ,\tag{\ref{sub3}}\\
         \textrm{s.t.} \  
         \nonumber
         & \psi_k \leq \frac{p_k|\bm{\omega}_k^H\bar{\mathbf{h}}_k|^2}{\bar{\sigma}_{rk}^2+\bar{\sigma}_{uk}^2||\bm{\omega}_k^H\mathbf{H}||_2^2+\sum_{i=1,i\neq k}^K p_i\bar{\beta}_{ik} |\bm{\omega}_k^H\bar{\mathbf{h}}_i|^2},\\
         &\forall k \in \mathcal{K},\\
         &\psi_k \geq 2^{\frac{R_{k,\mathrm{min}}}{B}-1},\forall k \in \mathcal{K}, \\
   & 0\leq p_k\leq P_{k,\mathrm{max}},\forall k \in \mathcal{K},
    \end{align}
\end{subequations}
where $\bm{\psi} = [\psi_1,\dots,\psi_K]^T$ is the slack vector, and $\bar{\mathbf{h}}_k = \mathbf{Hh}_k$, $\bar{\sigma}_{rk}^2 = \sigma_{r}^2(\beta_0\beta_k)^{-1}$, $\bar{\sigma}_{{uk}}^2 = \sigma_{u}^2 \beta_k^{-1}$ and $\bar{\beta}_{ik} = \beta_i \beta_k^{-1}$ are all constants. 

To tackle the nonconvexity of (\ref{sub3}a), we introduce a slack variable $\chi_k>0$ and reformulate it as
\begin{equation}\label{sub3a1}
p_k|\bm{\omega}_k^H\bar{\mathbf{h}}_k|^2\geq \psi_k\chi_k,
\end{equation}
and
\begin{equation}\label{sub3a2}
 \bar{\sigma}_{rk}^2+\bar{\sigma}_{uk}^2||\bm{\omega}_k^H\mathbf{H}||_2^2+\sum_{i=1,i\neq k}^K p_i \bar{\beta}_{ik}|\bm{\omega}_k^H\bar{\mathbf{h}}_i|^2 \leq \chi_k.  
\end{equation}

Without loss of generality, the term $\bm{\omega}_k^H\bar{\mathbf{h}}_k$ in constraint
\eqref{sub3a1} can be expressed as a real number through an arbitrary rotation to beamforming $\bm{\omega}_k${\cite{9461768,10032275}.} As a result, constraint \eqref{sub3a1} can be equivalently expressed as
{
\begin{equation}\label{sub3a3}
   \bm{\omega}_k^H \bar{\mathbf{h}}_k \geq \dfrac{\sqrt{\psi_k \chi_k}}{\sqrt{p_k}}.
\end{equation}
}
Since $\bm{\omega}_k^H {\bar{\mathbf{h}_k}}$ is linear, we utilize first-order Taylor series to expand its non-convex term. Then, \eqref{sub3a3} can be rewritten as

{
\begin{align}\label{sub3a4}
\nonumber
     \bm{\omega}_k^H \bar{\mathbf{h}}_k &\geq \dfrac{\sqrt{\chi_k^{(n)}}(\psi_k-\psi_k^{(n)})}{2\sqrt{p_k^{(n)}\psi_k^{(n)}}}+\dfrac{\sqrt{\psi_k^{(n)}}(\chi_k-\chi_k^{(n)})}{2\sqrt{p_k^{(n)}\chi_k^{(n)}}}\\
    &- \dfrac{\sqrt{\psi_k^{(n)}\chi_k^{(n)}}(p_k-p_k^{(n)})}{2(p_k^{(n)})^{\frac{3}{2}}}+\frac{\sqrt{\psi_k^{(n)}\chi_k^{(n)}}}{\sqrt{p_k^{(n)}}}.
\end{align}
}

{For constraint \eqref{sub3a2}, we first prove that the term $\bar{\sigma}_u^2||\bm{\omega}_k^H\mathbf{H}||_2^2$ is convex with respect to $\bm{\omega}_k$.

The Hessian matrix of $||\bm{\omega}_k^H\mathbf{H}||_2^2$ can be derived as
\begin{equation}
    \nabla^2 \bm{\omega}_k^H\mathbf{H}\mathbf{H}^H\bm{\omega}_k=\mathbf{H}\mathbf{H}^H.
\end{equation}
For any non-zero vector $\mathbf{v}$, $\mathbf{v}^H\mathbf{H}\mathbf{H}^H\mathbf{v}=||\mathbf{H}^H\mathbf{v}||_2^2\geq0$. Hence, $\bar{\sigma}_{uk}^2||\bm{\omega}_k^H\mathbf{H}||_2^2$ is convex over $\bm{\omega}_k$. Similarly, we can prove that $|\bm{\omega}_k^H\bar{\mathbf{h}}_i|^2$ is also convex with respect to $\bm{\omega}_k$.
}

Then, the nonconvexity of \eqref{sub3a2} is caused by the term $\sum_{i=1,i\neq k}^K p_i \bar{\beta}_{ik}|\bm{\omega}_k^H\bar{\mathbf{h}}_i|^2$. To solve this, we first rewrite it in the following equivalent form.
\begin{equation}\label{sub3a5}
    \frac{1}{4}\sum_{i=1,i\neq k}^K \left[\left(p_i\bar{\beta}_{ik}+|\bm{\omega}_k^H\bar{\mathbf{h}}_i|^2\right){^2}-\left(p_i\bar{\beta}_{ik}-|\bm{\omega}_k^H\bar{\mathbf{h}}_i|^2\right){^2}\right].
\end{equation}

Given the fact that {$\left(p_i\bar{\beta}_{ik}+|\bm{\omega}_k^H\bar{\mathbf{h}}_i|^2\right){^2}$ is jointly convex with respect to $p_i$ and $\bm{\omega}_k$, we replace the non-concave term $\left(p_i\bar{\beta}_{ik}-|\bm{\omega}_k^H\bar{\mathbf{h}}_i|^2\right){^2}$ with its first-order Taylor approximation.} Then, \eqref{sub3a5} can be written as
\begin{align} \label{sub3a6}
\nonumber
    &\frac{1}{4}\sum_{i=1,i\neq k}^K (p_i\bar{\beta}_{ik}+|\bm{\omega}_k^H\bar{\mathbf{h}}_i|{^2})^2-\Big[p_i^{(n)}\bar{\beta}_{ik}-|(\bm{\omega}_k^H)^{(n)}\bar{\mathbf{h}}_i|^2\Big]^2\\
    \nonumber
    &{-2}\bar{\beta}_{ik}\left[p_i^{(n)}\bar{\beta}_{ik} - |(\bm{\omega}_k^H)^{(n)}\bar{\mathbf{h}}_i|^2\right]  \left(p_i-p_i^{(n)}\right)  \\
     \nonumber
    & {+}4\left[p_i^{(n)}\bar{\beta}_{ik} \right. - |(\bm{\omega}_k^H)^{(n)}\bar{\mathbf{h}}_i|^2\Big]{\mathrm{Re}\left[\bar{\mathbf{h}}_i^H\bm{\omega}_k^{(n)}\bar{\mathbf{h}}_i^T(\bm{\omega}_k-\bm{\omega}_k^{(n)})\right]} \\
    & +\bar{\sigma}_{uk}^2||\bm{\omega}_k^H\mathbf{H}||_2^2+\bar{\sigma}_{rk}^2\leq \chi_k, \forall k \in \mathcal{K}.
\end{align}

As a result, problem
\eqref{sub3} can be reformulated as
\begin{subequations}\label{sub31}
    \begin{align} 
       & \mathop{\max}_{\mathbf{p}, \bm{\Omega},\bm{\psi},\bm{\chi}}\quad \dfrac{B\sum_{k=1}^K\mathrm{log}_2(1+\psi_k)}{\sum_{k=1}^Kp_k + P_B + P_F} ,\tag{\ref{sub31}}\\
         \textrm{s.t.} \  
         \nonumber
         &(\ref{sub3}\mathrm{b}), (\ref{sub3}\mathrm{c}),\eqref{sub3a4}, \eqref{sub3a6}{.}
    \end{align}
\end{subequations}

The objective function of problem \eqref{sub31} is {a fraction composed of} a concave function divided by a convex function, and the feasible set is convex. Therefore, the optimal solution of problem \eqref{sub31} can be obtained using the Dinkelbach method\cite{dinkelbach1967nonlinear}. As a result, the joint optimization problem on power allocation and beamforming design can be solved using the SCA method, where Dinkelbach is utilized in each iteration. Specific details are shown in Algorithm \ref{Algorithm3}.

\begin{algorithm}[t]
 	\caption{Dinkelbach method based algorithm for problem \eqref{sub3} with SCA}
 	  \label{Algorithm3}
   Initialized parameters $\mathbf{p}^{(0)}$, $\bm{\Omega}^{(0)}$ {$\bm{\chi}^{(0)},\bm{\psi}^{(0)}$}. Set $n = 1$.
    \BlankLine 
    \While{\eqref{sub31} does not converge}{
    The objective function of problem \eqref{sub3} is equivalent to be
    $\underbrace{B\sum_{k=1}^K\mathrm{log}_2(1+\psi_k)}_{f(\mathcal{S})} - s^{(t)}(\underbrace{\sum_{k=1}^K p_k + P_B + P_F}_{g(\mathcal{S})})$, where $\mathcal{S} = \{\mathbf{p},\bm{\Omega},\bm{\chi},\bm{\psi}\}$, $s^{(t)}$ is determined by the iteration algorithm as shown in \cite{dinkelbach1967nonlinear}:\\
    Initialize $t={0}$,$s^{(0)}$, $\mathcal{S}^{(0)} = {\mathcal{S}^{(n-1)}} $ and $\varepsilon_3 > 0$\\
    \While{$|f(\mathcal{S}^{(t)}) - s^{(t)}\ g(\mathcal{S}^{(t)})| < \varepsilon_3$}{
     $t=t+1$.
     
    $s^{(t)} = \frac{f(\mathcal{S}^{(t-1)})}{g(\mathcal{S}^{(t-1)})}$.

    $\mathcal{S}^{(t)} = \mathrm{arg}\ \mathrm{max} \ f(\mathcal{S}^{}) - s^{(t)}\ g(\mathcal{S})$.

    }
    Solve problem \eqref{sub3} with approximated problem \eqref{sub31}, whose objective function is equivalently replaced by $f(\mathcal{S}^{(t)}) - s^{(t)}\ g(\mathcal{S}^{(t)})$, and obtain the output at this iteration. 
    
    Set $n=n+1.$
    }
 \end{algorithm}
\subsection{Complexity Analysis}
The overall algorithm to maximize EE of the proposed system is presented in Algorithm~\ref{Algorithm4}. For large-scale optimization sub-problem \eqref{sub1}, the main complexity lies in solving problem \eqref{sub11} with SCA method. Since there are $K{+2M}+2$ variables in \eqref{sub11}, the complexity of iterations required for SCA method is $\mathcal{O}(\sqrt{K{+2M}+2}\mathrm{log}_2(1/\epsilon_1))$\cite{9497709}, where $\epsilon_1$ is the accuracy of the SCA method for solving problem \eqref{sub11}. {In} each iteration, the complexity of solving problem \eqref{sub12} is $\mathcal{O}(S_1^2S_2)$\cite{lobo1998applications}, where $S_1=2K{+2M}+2$ is the total number of variables of problem \eqref{sub12}, and $S_2=3K{+M}+4$ is the total number of constraints of problem \eqref{sub12}. Hence, the complexity of solving problem \eqref{sub1} is $\mathcal{O}(K^{3.5}\mathrm{log}_2(1/\epsilon_1))$. For small-scale optimization problem, the total complexity is $\mathcal{O}(K^{4}\mathrm{log}_2(1/\epsilon_{21})+MK^{3}\mathrm{log}_2(1/\epsilon_{22})+MK^{3}\mathrm{log}_2(1/\epsilon_{23})+NK^3\mathrm{log}_2(1/\epsilon_{24}))\triangleq\mathcal{O}(Q)$, where $\epsilon_{21}$, $\epsilon_{22}$, $\epsilon_{23}$, and $\epsilon_{24}$ are the accuracies of the SCA method for solving problem \eqref{sub21}, \eqref{sub22}, \eqref{sub23}, and \eqref{sub24}, respectively. Similarly, the total complexity of solving problem \eqref{sub3} is $\mathcal{O}(S_3K^5\mathrm{log}_2(1/\epsilon_3))$, where $S_3$ is the number of iterations for solving problem \eqref{sub3} with the Dinkelbach method, and $\epsilon_3$ is the accuracy required of the SCA method. As a result, the total complexity of Algorithm~\ref{Algorithm4} for solving
problem \eqref{sys1max0} is $\mathcal{O}(S_4K^{3.5}\mathrm{log}_2(1/\epsilon_1)+S_4Q+S_4S_3K^5\mathrm{log}_2(1/\epsilon_3))$, where $S_4$ is the number of iterations of Algorithm~\ref{Algorithm4}.

 \begin{algorithm}[t]
 	\caption{Iterative Optimization for Problem \eqref{sys1max0}}
 	  \label{Algorithm4}
    
    \BlankLine
    Initialize parameters $\mathbf{T}^{(0)}$, $\mathbf{U}^{(0)}$, $\mathbf{B}^{(0)}$, 
    $\mathbf{R}^{(0)}$, $\mathbf{o}_u^{(0)}$, $\mathbf{o}_b^{(0)}$,
    $\mathbf{P}^{(0)}$, $\mathbf{\bm{\Omega}^{(0)}}$. Set iteration number $n=1$.
    
    \While{\textnormal{Objective value of \eqref{sys1max0} does not converge}}{
    Solve problem \eqref{sub1} by Algorithm~\ref{Algorithm1} with given $\mathbf{T}^{(n-1)}$, $\mathbf{R}^{(n-1)}$, $\mathbf{P}^{(n-1)}$ and $\bm{\Omega}^{(n-1)}$. Obtain the solution denoted as $ \mathbf{o}_u^{(n)}$, $ \mathbf{o}_b^{(n)}$, $\mathbf{U}^{(n)}$, and $\mathbf{B}^{(n)}$.
    
    Solve problem \eqref{sub2} by Algorithm~\ref{Algorithm2} with given $ \mathbf{o}_u^{(n)}$, $ \mathbf{o}_b^{(n)}$, $\mathbf{P}^{(n-1)}$ and $\bm{\Omega}^{(n-1)}$, and obtain the solution denoted as $ \mathbf{R}^{(n)}$, $ \mathbf{T}^{(n)}$, $\mathbf{U}^{(n)}$, and $\mathbf{B}^{(n)}$.

     Solve problem \eqref{sub3} by Algorithm~\ref{Algorithm3} with given $ \mathbf{o}_u^{(n)}$, $ \mathbf{o}_b^{(n)}$, $ \mathbf{T}^{(n)}$, $\mathbf{U}^{(n)}$, and $\mathbf{B}^{(n)}$, and obtain the solution denoted as $ \mathbf{P}^{(n)}$, and $ \mathbf{\Omega}^{(n)}$.

    Calculate the value of the objective function (\ref{sys1max0}). 
    
    Set $n=n+1$.
    }
 		
 \end{algorithm}

\section{Numerical Simulation Results}

In our simulations, we assume that all elevation angles and azimuth angles are i.i.d, randomly distributed in $[0,\pi]$, the maximum transmit power {and minimum required rate} of {all users} {are equal}, i.e., $P_{1,\mathrm{max}}=\cdots=P_{K,\mathrm{max}}{=P_{\mathrm{max}}}$ { and $R_{1,\mathrm{min}}=\cdots=R_{K,\mathrm{min}} = R_{\mathrm{min}}$}, and wave length $\lambda = 0.3\mathrm{m}$. We set the conductivity of the FAR is $\sigma=10^{-14}$, permittivity $\varepsilon = 4\varepsilon_v$, and permeability $\mu = \mu_v$, where $\varepsilon_v$ and $\mu_v$ are the vacuum permittivity and vacuum permeability, respectively. The location of BS is set as $(0, 0, 0)$, the distance between FAR and BS along y-axis $Y_0 = 50\mathrm{m}$, and the users are uniformly in a circle centered at $(d_x, d_y, 0)$ with radius $ d_0$, as shown in Fig.~\ref{SystemModel}. We compare our proposed algorithm using FAR (labeled as `FAR') with the scheme with STAR-RIS\cite{mu2021simultaneously} deployed on both sides of the blockage (labeled as `SRIS'), and the conventional AF scheme\cite{6671453} (labeled as `AFR') on both sides of the blockage. In particular, STAR-RIS is set as `Energy Splitting' mode\cite{mu2021simultaneously}, and the total numbers of reflection and transmission elements are $2M$ with { equal reflection coefficients $\varrho^r$ and transmission coefficients $\varrho^t$, i.e., $\varrho^t_m = {0.9}, \varrho^r_m = {0.1},\forall m \in \mathcal{M}$.} The circuit power of each STAR-RIS element is denoted as $P_S$. For AFR, we consider that there are $M$ antennas on both sides of the blockage, with the circuit power of each antenna denoted as $P_{A}$. {For the circuit power, we set $P_B = 39\mathrm{dBm}$, $P_F = 30\ \mathrm{dBm}$, $P_S = 5\ \mathrm{dBm}$, and $P_A = 20\ \mathrm{dBm}$. The other main parameters are set in Table~\ref{systemParameter2}.}

Fig.~\ref{Convergency} shows the convergence behavior of the proposed algorithm, where the value of objective function (OBJ) at iteration $0$ means the value of OBJ with the initialized feasible parameters. In the case where there are $K=4$ users and the minimum required rate $R_{\mathrm{min}} = 1\mathrm{Mbps}$, our proposed alternating algorithm converges after $3$ rounds. When the number of users increases to $K=8$ and $K=16$, the required iteration rounds decrease to $2$ and $1$, respectively. This is because when the number of users increases, the interference of the users also increases. Hence, the minimum rate is harder to satisfy and the proposed algorithm is more easily to get the locally optimal solution. To prove this, we reduce $R_{\mathrm{min}}$ to $0.8\mathrm{Mbps}$, $0.5\mathrm{Mbps}$, and $0.3\mathrm{Mbps}$. Then, we can observe that with a smaller $R_{\mathrm{min}}$, the larger the number of required iteration rounds, the higher the converged OJB value.

\begin{table}[t]
\caption{{MAIN SYSTEM PARAMETERS}}
\centering
\begin{tabular}{|c|c|c|}
\hline
{\textbf{Parameter} }   & {\textbf{Notation}} & {\textbf{Value}} \\ \hline
{Bandwidth of the system  }  & {$B$ }            & {$10\ \rm{MHz}$   }     \\
{Maximum transmit power }& {$P_\mathrm{max}$}  & {$5\ \rm{dbm}$ }         \\
{Minimum distance between FAs  }             &{ $d_{\mathrm{min}}$ }          & {$\frac{\lambda}{2}$  }           \\
{Minimum required achievable rate }              &{ $R_{\mathrm{min}}$   }        & { $1\mathrm{Mbps}$  }           \\
{Noise power at BS}      & {$\sigma_r^2$  }   & {$-174\  \rm{dbm/Hz}$ }       \\
{Noise power at FAR-U }     & {$\sigma_u^2$ }    & {$-90\  \rm{dbm/Hz}$  }      \\
{The number of FAs of FAR  }   & {$M$   }       & { $4$ }            \\
{The number of FAs at BS }    & {$N$  }        & {$4$ }            \\
{The number of users}     & {$K$ }         & {$4$  }           \\
{Racian factors }    & {$K_0/K_1$}          & {$3+\sqrt{12}$  }           \\
{Path-loss exponent}  & ${l}$          & {$2.6$} \\   {Length/width/height of the blockage } & {$L/W/H$  }        & {$10/0.3/5\ \mathrm{m}$} \\ 
\hline
\end{tabular}
\label{systemParameter2}
\end{table}

\begin{figure}[t]
\centering
\includegraphics[width=1\linewidth]{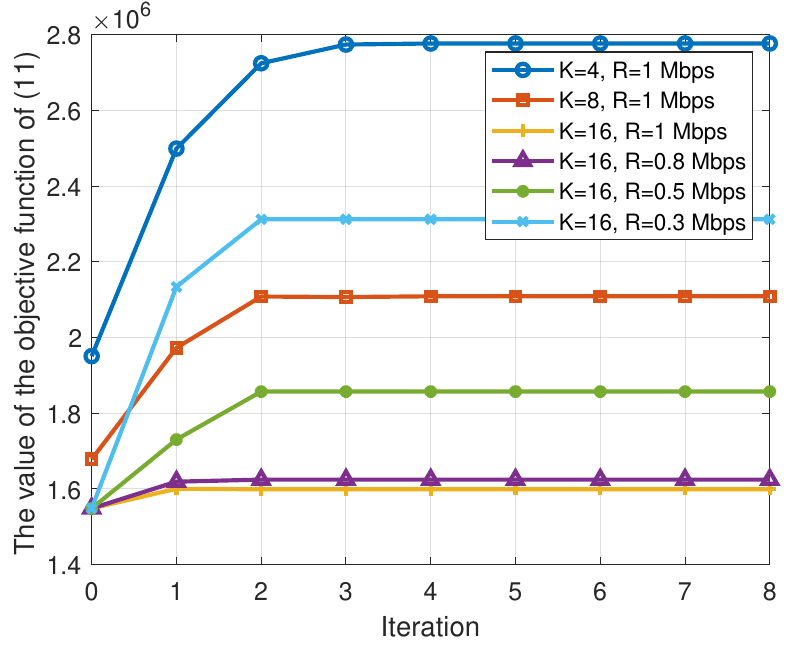}
\caption{Convergency behavior of the proposed algorithm.} 
\label{Convergency}
\end{figure}

Fig.~\ref{Rmin} further demonstrates the relationships between EE and the number of users, and between EE and minimum required rate. In Fig.~\ref{Rmin}, scheme label `$K=k$, FAR' means using the proposed algorithm with $k$ users. As we can observe, the EEs of all three schemes decrease when the minimum required rate increases. This is because with higher minimum rate requirement, the minimum rate constraint is harder to satisfy, and thus the EE becomes lower. Besides, under the same minimum required rate, scheme `$K=4$, FAR' always {holds} the highest EE, since there is the least user interference.

\begin{figure}[t]
\centering
\includegraphics[width=1\linewidth]{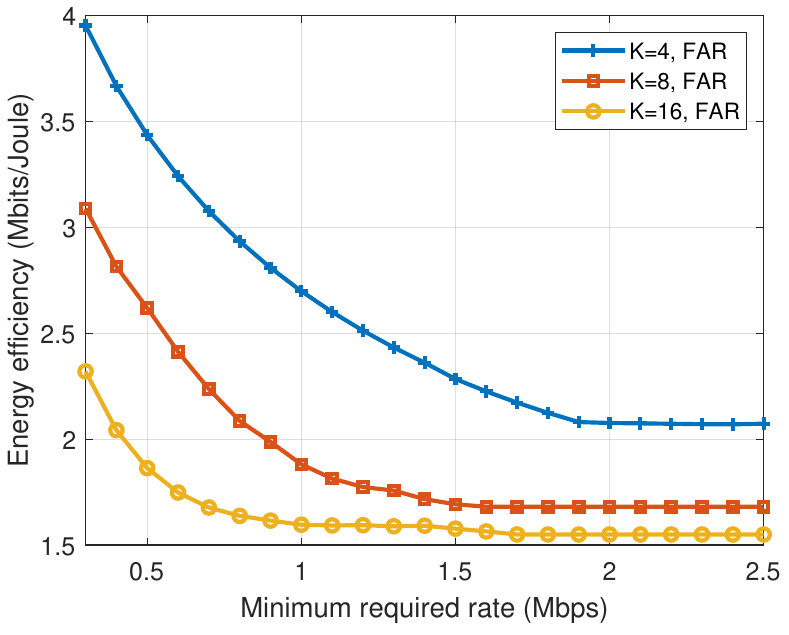}
\caption{{Energy efficiency versus minimum required rate of the system}.} 
\label{Rmin}
\end{figure}

\begin{figure}[t]
\centering
\includegraphics[width=1\linewidth]{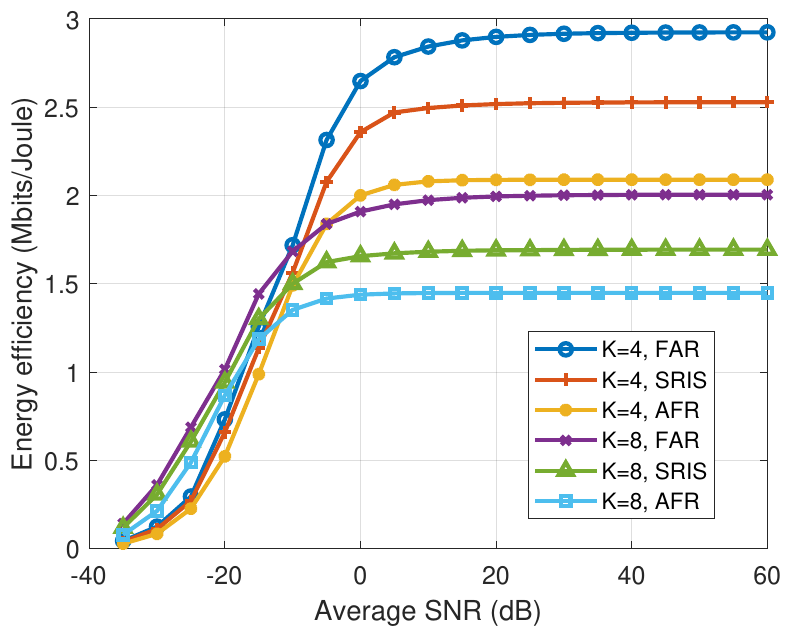}
\caption{{Energy efficiency versus average SNR.}} 
\label{Psigma}
\end{figure}
Fig.~\ref{Psigma} presents the numerical results on how EE changes as the average SNR of proposed system varies, {where the average SNR is defined as $P_{\mathrm{max}}/(\beta_0\beta_k)^{-1}\sigma_r^2$}. For low average SNR, all schemes have very small EEs, since the noise power is much higher than the signal power. When the average SNR becomes higher, the EEs of all schemes first increase and then hold on different levels. From this figure, in general, the smaller the number of users, the larger the EE value that the system can maintain when the SNR is large enough. At the same time, `FAR' always achieves a higher EE than those of the `SRIS' and `AFR' under the same SNR and number of users. In particular, `FAR' can increase up to {$15.65\%$ and $39.94\%$ }EE with $K=4$ users, compared with `SRIS' and `AFR', respectively. Besides, for schemes with $K=8$ users, our proposed algorithm can improve the EE by up to {$18.35\%$ and $38.36\%$} compared to the `SRIS' and `AFR', respectively.

\begin{figure}[t]
\centering
\includegraphics[width=1\linewidth]{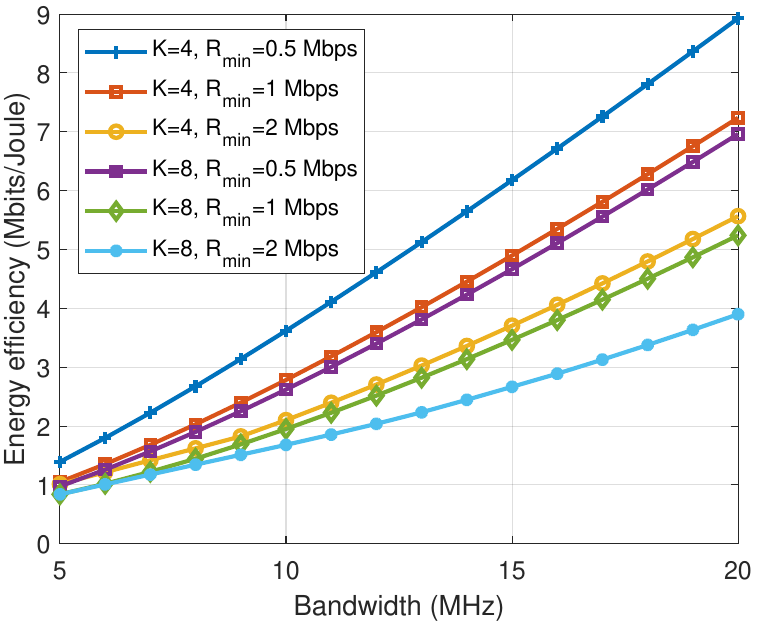}
\caption{Energy efficiency versus bandwidth of the system.} 
\label{Bandwidth}
\end{figure}
To reveal how the system bandwidth affects the system performance, Fig.~\ref{Bandwidth} investigates the EE of the proposed system versus the bandwidth. From the legend, EEs of all schemes increase with the increasing of the bandwidth of the system. For a given bandwidth, fewer users in out proposed system means less interference, and hence, it typically corresponds to a higher EE. However, if the minimum required achievable rate is set too high, an EE of the scheme with fewer users may be lower than the scheme with more users but a smaller $R_{\mathrm{min}}$, as represented by the two schemes $K=4, R_{\mathrm{min}}=2\mathrm{Mbps}$, and $K=8, R_{\mathrm{min}}=0.5\mathrm{Mbps}$ .

\begin{figure}[t]
\centering
\includegraphics[width=1\linewidth]{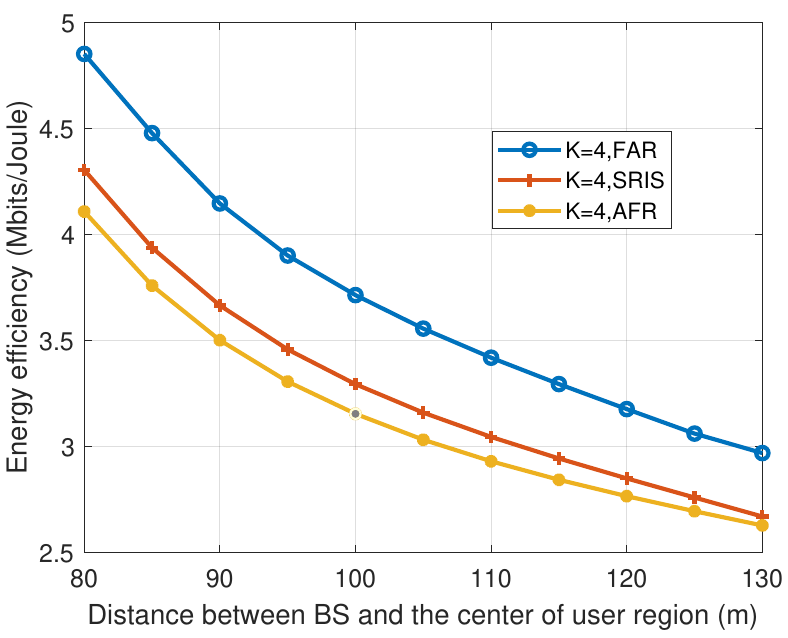}
\caption{{Energy efficiency versus the distance between BS and the center of user region.}} 
\label{Y0}
\end{figure}

{From Fig.~\ref{Convergency}-\ref{Bandwidth}, we can find that the EE performance of `FAR' scheme with different users holds similar changing trends. Hence, to demonstrate the simulation results more clearly, we only illustrate the results simulated with $K=4$ users. Besides, in order to obtain more diverse simulation results, we randomly distribute the users within a wider rectangular area instead of being uniformly distributed in a circle.} {With the redefined user region, we first explore the EE performance versus the distance between the center of the user region and BS. Since the distance between the FAR and BS is set as $Y_0=50\ \mathrm{m}$, we set the distance between user region and BS ranging from $80\ \mathrm{m}$ to $130\ \mathrm{m}$. As demonstrated in Fig.~\ref{Y0}, the EE performance of three schemes all decreases with respect to the increasing distance. This is because a farther distance causes a bigger large-scale fading, decreasing the sum rate of the system. With the same distance, the `FAR' scheme always outperforms the `SRIS' scheme and `AFR' scheme, with up to $13.74\%$ and $19.15\%$ higher EEs, respectively.}

{Fig.~\ref{Psigr} illustrates the EE of the proposed FAR-assisted system versus the noise power at the BS. In particular, we compare the proposed `FAR' scheme with `SRIS' scheme and `AFR' scheme, all with $K=4$ users. When the noise power at the BS ranges from $-170\ \mathrm{dBm}$ to $-70\ \mathrm{dBm}$, all the EEs of three schemes first remain, and when the noise power increases to be higher than $-110\ \mathrm{dBm}$, the EEs of all the schemes decrease. This is because with a higher noise power at BS the noise, the received signal gets a smaller SINR, resulting in smaller EE of the system. Moreover, as the noise power exceeds $-100\ \mathrm{dBm}$ and continues to increase, the EE of both schemes `$K=4, \mathrm{SRIS}$' and `$K=4, \mathrm{AFR}$' rapidly decreases, while the EE performance of them gradually tends to be the same. The EE of our proposed `$K=4, \mathrm{FAR}$' scheme rapidly decreases only after the noise power exceeds $-85\ \mathrm{dBm}$ and continues to increase. The EEs of the three schemes are all at very small values when the noise power was $-60\ \mathrm{dBm}$, which is consistent with the simulation results shown in Fig.~\ref{Psigma}. With the same value of noise power at BS, the proposed `FAR' scheme always outperforms the `SRIS' scheme and the `AFR' scheme with up to $23.39\%$ and $32.25\%$ higher EE.}

\begin{figure}[t]
\centering
\includegraphics[width=1\linewidth]{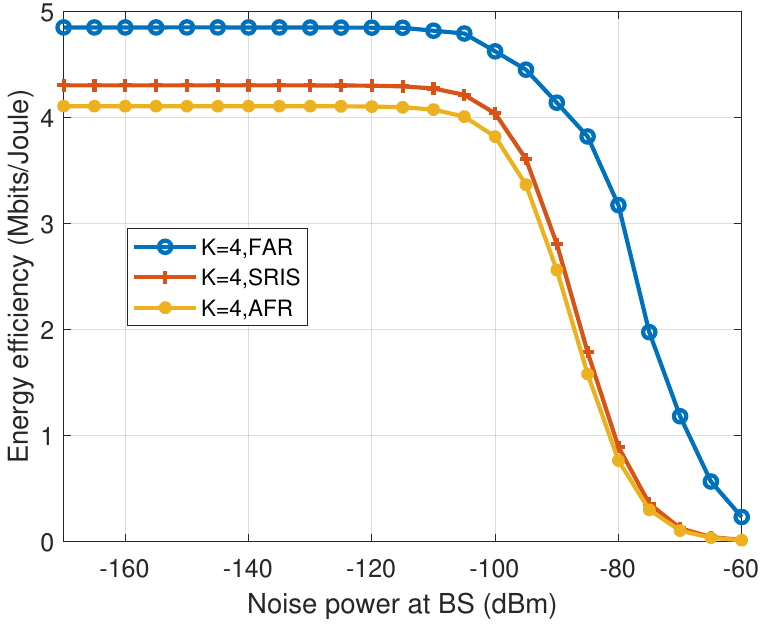}
\caption{{Energy efficiency versus noise power at the BS.}} 
\label{Psigr}
\end{figure}

{Fig.~\ref{FARU} explores the changing trend of EE performance of the proposed FAR-assisted system versus the number of FAs deployed at the FAR-U. In this legend, with given number of the FAs at the FAR-U, we compare the EE performance of schemes with different minimum rate demands. In general, EE performance becomes better with larger number of FAs at the FAR-U. This is because more FAs at the FAR-U expand the spatial dimensions of the received signals at the FAR-U, and further improve the SINR of the system. On the other hand, with the same number of FAs at the FAR-U, there is a higher energy efficiency of the system with a fewer $R_{\mathrm{min}}$, which are consistent with the patterns shown in Figs.~\ref{Rmin} and ~\ref{Bandwidth}.}
\begin{figure}[t]
\centering
\includegraphics[width=1\linewidth]{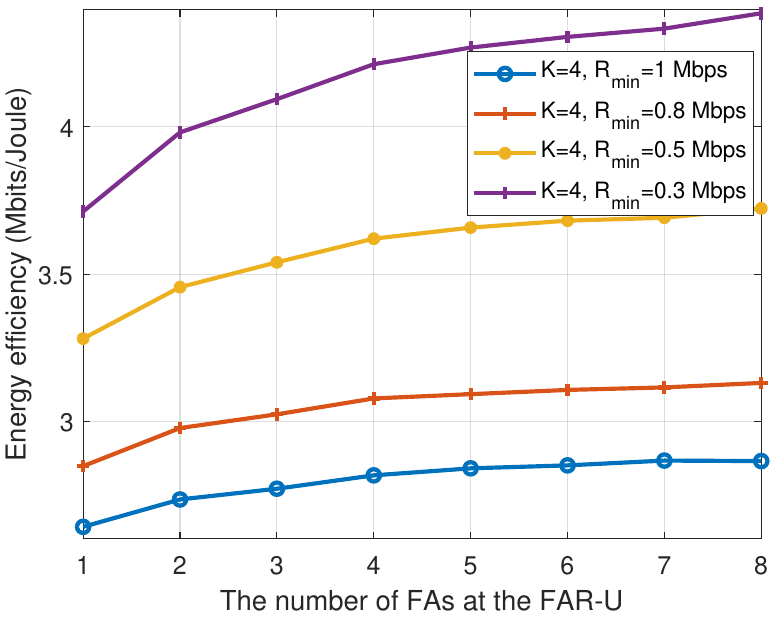}
\caption{{Energy efficiency versus the number of FAs at the FAR-U.}} 
\label{FARU}
\end{figure}

\section{Conclusion}
In this paper, we have investigated the energy efficient FAR-assisted uplink wireless communication system with multiple users. We have first modeled the FAR, {derived} the amplitude attenuation and phase shift of signals transmitting through the blockage with the help of FAR, and formulated the problem with the objective of maximizing system energy efficiency. To solve this problem, we have proposed an iterative algorithm by alternately solving the large-scale fading optimization sub-problem, small-scale fading optimization sub-problem, and joint power control and beamforming design sub-problem. Simulation results have validated the proposed algorithm outperforms the STAR-RIS scheme and the traditional AF relay scheme. The work in this paper bridges the gap in FAS research by focusing on NLoS scenarios and provides a comprehensive framework for FAR-aided wireless communication. 
{The FAR system has great potential to be introduced in indoor communication\cite{luo2017indoor}, covert communication\cite{li2024covert}, green communication\cite{xu2025rate}, {blue}and mobile edge computing (MEC) system\cite{10971386}. Moreover, communication technologies, such as beamforming design\cite{kumar2025beamforming,10689678,10437283}, channel deduction\cite{chen2025channel}, and performance optimization\cite{10196368,10615837,10000992} compatible with the FAR are also interesting exploration directions.}

\begin{appendices}
    \section{Proof of Theorem 1}\label{Proof1}
{Different from} the vector notation stated in Section~\ref{Introduction}, in Appendix~\ref{Proof1}, we use $\vec{A}$ to denote the vector $\vec{A}$.

Given the fact that FAs can be much smaller than the width of an FAR, TE mode plane waves are assumed to be transmitted from FA on the FAR-U to FA on the FAR-B.

If denote $\vec{H}$ is magnetic intensity vector, $\vec{D}$ is electric displacement vector, and $\vec{J}_c$ is conductive current density, we can get the differential form of Ampere's total current law as
\begin{equation}\label{AmpereTotalCurrentLaw}
    \nabla \times \vec{H} =  j \omega\vec{D} + \vec{J}_c.
\end{equation}

Since an isotropic medium is assumed, we can obtain
\begin{equation}\label{Ohm's law}
    \vec{J}_c =  \sigma\vec{E}, \vec{D} = \varepsilon \vec{E},
\end{equation}
{where $\vec{E}$ is the electric intensity field vector.}

Substituting \eqref{Ohm's law} into \eqref{AmpereTotalCurrentLaw}, we can obtain
\begin{align}
    \nabla \times \vec{H} =  j \omega\varepsilon\vec{E} + \sigma\vec{E} 
    = j \omega\underbrace{(\varepsilon-j\frac{\sigma}{\omega})}_{\triangleq \tilde{\varepsilon}}\vec{E},
\end{align}
where $\tilde{\varepsilon}$ is the complex permittivity.

Then, the Maxwell's equations with introducing $\tilde{\varepsilon}$ can be written as
\begin{equation}\label{ME 1}
    \nabla\times \vec{E}  =-j\omega\mu \vec{H},
\end{equation}
\begin{equation}\label{ME 2}
    \nabla\times \vec{H}  =j\omega\tilde{\varepsilon}\vec{E},
\end{equation}
\begin{equation}\label{ME 3}
\nabla\cdot \vec{H}  =0, 
\end{equation}
\begin{equation}\label{ME 4}
\nabla\cdot\vec{E}  =0.
\end{equation}

Take the curl of both sides of \eqref{ME 1}, we can get
\begin{equation}\label{ME Calculation}
        \nabla \times (\nabla\times \vec{E}) =-j\omega\mu \nabla \times \vec{H}.
\end{equation}

Substituting \eqref{ME 4} into the left hand of \eqref{ME Calculation} and substituting \eqref{ME 2} into the right hand of \eqref{ME Calculation}, we have
\begin{equation}\label{solution ME}
        (\nabla^2 + \omega^2\mu \tilde{\varepsilon})\vec{E}= 0,
\end{equation}
where $\vec{E}$ is no longer coupled with $\vec{H}$.

Based on the properties of the TE mode, {the electric field intensity vector at the point along the propagation direction with distance $d$ away from the {$q$-th} FA {at} the FAR-U} can be given as
\begin{equation}\label{solution E}
        \vec{E}(d)= E_{{q}} e^{-j \tilde{\kappa}d} \hat{E}_0,
\end{equation}
where $E_{q}$ is the amplitude of the electric field intensity, $\hat{E}_0$ is the unit vector of electric field intensity, and $\tilde{\kappa} = (\omega^2\mu \tilde{\varepsilon})^{\frac{1}{2}} = \sqrt{\omega^2\mu\varepsilon} (1-j \dfrac{\sigma}{\omega \varepsilon})$.

Then, we have
\begin{align}
\nonumber
    \tilde{\kappa} &= C_1 \left(1+\dfrac{\sigma^2}{\omega^2\varepsilon^2}\right)^{\frac{1}{4}} \left(\dfrac{1}{\sqrt{1+\dfrac{\sigma^2}{\omega^2\varepsilon^2}}}- \dfrac{j\sigma}{\omega \varepsilon\sqrt{1+\dfrac{\sigma^2}{\omega^2\varepsilon^2}}}\right)\\
    \nonumber
    &= C_1\left[1+\mathrm{tan}^2\left(\gamma\right)\right]^{\frac{1}{4}}\left[\mathrm{cos}\left(\gamma\right) - j\mathrm{sin}\left(\gamma\right)\right]^{\frac{1}{2}}\\
    &= C_1 \mathrm{sec}^{\frac{1}{2}}\left(\gamma\right)\left[\mathrm{cos}\left(\frac{\gamma}{2}\right) - j\mathrm{sin}\left(\frac{\gamma}{2}\right)\right].
\end{align}

Therefore, the electric field intensity vector $\vec{E_{p}}$ at the {$p$-th} FA at FAR-B can be given as
\begin{equation}\label{amplitudeAttenuationCoefficient}
\frac{E_p}{E_q}=
    \underbrace{e^{-C_1\mathrm{sec}^{\frac{1}{2}}(\gamma)\mathrm{sin}(\frac{1}{2}\gamma)d_{pq}} }_{\alpha_{pq}}\ \underbrace{e^{-jC_1\mathrm{sec}^{\frac{1}{2}}(\gamma)\mathrm{cos}(\frac{1}{2}\gamma)d_{pq}}}_{e^{-j\theta_{pq}}},
\end{equation}
where $d_{pq}$ is the distance between the {$q$-th} FA {at} the FAR-U and the {$p$-th} FA {at} the FAR-B.

Note that when the reference points of {$\mathcal{C}_u$} and {$\mathcal{C}_b$} are given, $d_{pq}$ can be approximately as $||\mathbf{o}_b-\mathbf{o}_u||_2$. Therefore, we can further have
\begin{equation}
    \alpha_{pq} \approx e^{-C_1\mathrm{sec}^{\frac{1}{2}}(\gamma)\mathrm{cos}(\frac{1}{2}\gamma)||\mathbf{o}_b-\mathbf{o}_u||_2} \triangleq \alpha.
\end{equation}

{\section{Derivations of $\nabla\tilde{h}_k^u(\mathbf{u}_m)$, $\nabla^2\tilde{h}_k^u(\mathbf{u}_m)$,\\ and a feasible $\iota_k^u$}\label{Proof2}}
{
Similar to \eqref{phaseDifference}, the phase difference between the $l$-th FA at the FAR-U and $\mathbf{o}_{u}$ can be given as 
\begin{equation}\label{proof21}
\rho(\mathbf{u}_l) = e^{j \frac{2\pi}{\lambda}(\bm{\kappa}_{u_k})^T (\mathbf{u}_l - \mathbf{o}_{u})}
\end{equation}
where $\bm{\kappa}_{u_k}=[\mathrm{sin}\theta_{u_k}\mathrm{cos}\phi_{u_k},\mathrm{cos}\theta_{u_k},\mathrm{sin}\theta_{u_k}\mathrm{sin}\phi_{u_k}]^T$, $\theta_{u_k}$, and $\phi_{u_k}$ are the wave vector, the elevation of AoA, and the azimuth of AoA of the signal from user $k$, respectively.

Denote $\bm{\Theta}^T=[\bm{\vartheta}_1,\dots,\bm{\vartheta}_M]$, where $\bm{\vartheta}_i=[\theta_{i1},\dots,\theta_{iM}]^T$, and we have $\mathbf{h}_k^u(\mathbf{u}_m)=[\bm{\vartheta}_1^T\mathbf{h}_k,\dots,\bm{\vartheta}_M^T\mathbf{h}_k]^T$. Without loss of generality, we consider the situation where $K_1>>1$, i.e., for the $l$-th entry of $\mathbf{h}_k$, we have $h_k(l)=\rho(\mathbf{u}_l)\rho^*(\mathbf{t}_k)$. For other situations, corresponding parameters can be added to the amplitude and phase of $h_k(l)$ for correction.

Specifically, for the $i$-th entry of $\mathbf{h}_k^u(\mathbf{u}_m)$, we have
\begin{align}\label{proof22}
\nonumber
    \bm{\vartheta}_i^T\mathbf{h}_k &=\sum_{l=1}^{M} e^{-j\theta_{il}}\rho(\mathbf{u}_l)\rho^*(\mathbf{t}_k)\\
    &=\sum_{l=1}^Me^{j\left[\frac{2\pi}{\lambda}\left(\bm{\kappa}_{u_k}^T\tilde{\mathbf{u}}_l-\kappa_{t_k}\right)-\theta_{il}\right]},
\end{align}
where $\tilde{\mathbf{u}}_l=\mathbf{u}_l-\mathbf{o}_u$, and $\kappa_{t_k}=\bm{\kappa}_{t_k}^T(\mathbf{t}_k-\mathbf{o}_k)$.

For the sake of simplicity in writing, we denote 
\begin{equation}\label{proof23}
    \mathbf{g}=\mathbf{h}_k^{u}(\mathbf{u}_m^{(n)})^H\hat{\mathbf{H}}_0=[|g_1|e^{j\angle g_1},\dots,|g_M|e^{j\angle g_M}]^T,
\end{equation}
where $|g_i|$ and $\angle g_i$ are the amplitude and phase of the $i$-th entry of $\mathbf{g}$, respectively.

Based on \eqref{proof22} and \eqref{proof23}, $\tilde{h}_k^u(\mathbf{u}_m)$ can be further given as 
\begin{align}\label{proof24}
\nonumber
   \tilde{h}_k^u(\mathbf{u}_m) &= \mathrm{Re}\left[\mathbf{g}^H\mathbf{h}_k^u(\mathbf{u}_m)\right]\\
   \nonumber
   &=\mathrm{Re}\left\{\sum_{i=1}^M|g_i|\sum_{l=1}^M e^{j\left[\frac{2\pi}{\lambda}\left(\bm{\kappa}_{u_k}^T\tilde{\mathbf{u}}_l-\kappa_{t_k}\right)-\theta_{il}-\angle g_i\right]}\right\}\\
   &= \sum_{i=1}^M|g_i|\sum_{l=1}^M\mathrm{cos}\left[\frac{2\pi}{\lambda}\left(\bm{\kappa}_{u_k}^T\tilde{\mathbf{u}}_l-\kappa_{t_k}\right)-\theta_{il}-\angle g_i\right].
\end{align}

Then, the gradient of $\tilde{h}_k^u(\mathbf{u}_m)$ over $\mathbf{u}_m$ is given as
\begin{align}\label{proof25}
\nonumber
    &\frac{\partial \tilde{h}_k^u(\mathbf{u}_m)}{\partial \mathbf{u}_m}=\\
    &-\sum_{i=1}^M|g_i|\mathrm{sin}\left(\angle \tilde{u}_m\right)\left(\frac{2\pi}{\lambda}\bm{\kappa}_{u_k}+\hat{C}_1\frac{\mathbf{b}_i-\mathbf{u}_m}{||\mathbf{b}_i-\mathbf{u}_m||_2}\right),
\end{align}
and the Hessian matrix of $\tilde{h}_k^u(\mathbf{u}_m)$ over $\mathbf{u}_m$ is given as
\begin{align}\label{proof26}
    \nonumber
    &\frac{\partial^2 \tilde{h}_k^u(\mathbf{u}_m)}{\partial \mathbf{u}_m \partial \mathbf{u}_m^T}=-\sum_{i=1}^M|g_i|\Bigg\{\mathrm{cos}(\angle \tilde{u}_m)\Big[\frac{4\pi^2}{\lambda^2}\bm{\kappa}_{u_k}\bm{\kappa}_{u_k}^T\\
    \nonumber
    &+\frac{2\pi\hat{C_1}[\bm{\kappa}_{u_k}(\mathbf{b}_i-\mathbf{u}_m)^T+(\mathbf{b}_i-\mathbf{u}_m)\bm{\kappa}_{u_k}^T]}{\lambda||\mathbf{b}_i-\mathbf{u}_m||_2}\\
    \nonumber
    &+\hat{C}_1^2\frac{(\mathbf{b}_i-\mathbf{u}_m)(\mathbf{b}_i-\mathbf{u}_m)^T}{||\mathbf{b}_i-\mathbf{u}_m||_2^2}\Big]\\
    &+\hat{C}_1\mathrm{sin}(\angle \tilde{u}_m)\Big[\frac{-\mathbf{I}_3}{||\mathbf{b}_i-\mathbf{u}_m||_2}+\frac{(\mathbf{b}_i-\mathbf{u}_m)(\mathbf{b}_i-\mathbf{u}_m)^T}{||\mathbf{b}_i-\mathbf{u}_m||_2^3}\Big]\Bigg\},
\end{align}
where $\angle \tilde{u}_m = \frac{2\pi}{\lambda}\left(\bm{\kappa}_{u_k}^T\tilde{\mathbf{u}}_m-\kappa_{t_k}\right)-\theta_{im}-\angle g_i$ and $\hat{C}_1 = C_1\mathrm{sec}^{\frac{1}{2}}(\gamma)\mathrm{cos}(\frac{1}{2}\gamma)$.

To find a $\iota_k^u$ satisfying $\iota_k^u\mathbf{I}_3-\nabla^2\tilde{h}_k^u(\mathbf{u}_m)\succeq0$, we can select a $\iota_k^u$, where $||\nabla^2\tilde{h}_k^u(\mathbf{u}_m)||_2 \leq \iota_k^u $, since $||\nabla^2\tilde{h}_k^u(\mathbf{u}_m)||_2\mathbf{I}_3-\nabla^2\tilde{h}_k^u(\mathbf{u}_m)\succeq0$. Moreover, for $ ||\nabla^2\tilde{h}_k^u(\mathbf{u}_m)||_2$, we have
\begin{align}\label{proof27}
 \nonumber
 ||\nabla^2\tilde{h}_k^u(\mathbf{u}_m)||_2^2
 &\leq  ||\nabla^2\tilde{h}_k^u(\mathbf{u}_m)||_F^2\\
 &=\sum_{i=1}^3\sum_{l=1}^3|\nabla^2\tilde{h}_k^u(\mathbf{u}_m)|_{i,l}^2,
\end{align}
where $|\nabla^2\tilde{h}_k^u(\mathbf{u}_m)|_{i,l}$ is the modulus of the entry in the $i$-th row and the $l$-th column of $\nabla^2\tilde{h}_k^u(\mathbf{u}_m)$.

Denote $T=\mathrm{max}\{L,W,H\}$, and we have $l_{b_i}-l_{u_m}\leq T, \forall l=x,y,z$. Besides, due to the facts that $||\mathbf{b}_i-\mathbf{u}_m||_2\geq W$, $\hat{C}_1 \leq C_1$, and all trigonometric functions have values no more than $1$, for any entry of $\nabla^2\tilde{h}_k^u(\mathbf{u}_m)$, we have
\begin{align}\label{proof28}
    \nonumber
    &|\nabla^2\tilde{h}_k^u(\mathbf{u}_m)|_{i,l}\leq\\
    & \sum_{i=1}^M |g_i|\underbrace{\left(\frac{4\pi^2}{\lambda^2}+\frac{4\pi C_1T}{\lambda W}+\frac{C_1^2T^2}{W^2}+\frac{C_1}{W}+\frac{C_1T^2}{W^3}\right)}_{\triangleq C_2}.
\end{align}


Considering that $y_{u_m}=Y_o+W$ is a constant, the entries in Hessian matrix $\nabla^2\tilde{h}_k^u(\mathbf{u}_m)$ that obtain partial derivatives with respect to $y_{u_m}$ are all equal 0. Hence, we can set $\iota_k^u$ as
\begin{equation}\label{proof29}
\iota_k^u=2\sum_{i=1}^M|g_i|C_2\geq||\nabla^2\tilde{h}_k^u(\mathbf{u}_m)||_F\geq||\nabla^2\tilde{h}_k^u(\mathbf{u}_m)||_2,
\end{equation}
which further satisfies $\iota_k^u\mathbf{I}_3-\nabla^2\tilde{h}_k^u(\mathbf{u}_m)\succeq0$.
}

 {
 \section{A global lower bound of $\tilde{h}_k^u(\mathbf{u}_m)$ }\label{proof3}
 According to Taylor’s theorem, for a given point $\mathbf{u}_m^{(n)}$, we can always have $\xi$, satisfying

\begin{align}\label{proof31}
\nonumber
 \tilde{h}_k^u(\mathbf{u}_m)
 &=  \tilde{h}_k^u(\mathbf{u}_m^{(n)})+\nabla\tilde{h}_k^u(\mathbf{u}_m^{(n)})^T\left(\mathbf{u}_m-\mathbf{u}_m^{(n)}\right)\\
 &+\frac{1}{2}\left(\mathbf{u}_m-\mathbf{u}_m^{(n)}\right)^T\nabla^2\tilde{h}_k^u(\xi)\left(\mathbf{u}_m-\mathbf{u}_m^{(n)}\right).
\end{align}

Since the spectral norm is always non-negative, we know that $||\nabla^2\tilde{h}_k^u(\mathbf{u}_m)||_2 \geq -\iota_k^u$. Further, for any $\mathbf{u}_m$, we have
\begin{equation}\label{proof32}
    -\iota_k^u\mathbf{I}_3-\nabla^2\tilde{h}_k^u(\mathbf{u}_m)\preceq0.
\end{equation}
Hence, we have
\begin{align}\label{proof33}
\nonumber
 &\tilde{h}_k^u(\mathbf{u}_m)
 =  \tilde{h}_k^u(\mathbf{u}_m^{(n)})+\nabla\tilde{h}_k^u(\mathbf{u}_m^{(n)})^T\left(\mathbf{u}_m-\mathbf{u}_m^{(n)}\right)\\
 \nonumber
 &+\frac{1}{2}\left(\mathbf{u}_m-\mathbf{u}_m^{(n)}\right)^T\nabla^2\tilde{h}_k^u(\xi)\left(\mathbf{u}_m-\mathbf{u}_m^{(n)}\right)\\
 \nonumber
 &\geq \tilde{h}_k^u(\mathbf{u}_m^{(n)})+\nabla\tilde{h}_k^u(\mathbf{u}_m^{(n)})^T\left(\mathbf{u}_m-\mathbf{u}_m^{(n)}\right)\\
 \nonumber
 &-\frac{1}{2}\left(\mathbf{u}_m-\mathbf{u}_m^{(n)}\right)^T\iota_k^u\mathbf{I}_3\left(\mathbf{u}_m-\mathbf{u}_m^{(n)}\right)\\
  \nonumber
 &=-\frac{1}{2}\iota_k^u \mathbf{u}_m^T\mathbf{u}_m+\left(\nabla\tilde{h}_k^u(\mathbf{u}_m^{(n)})+\iota_k^u\mathbf{u}_m^{(n)}\right)^T\mathbf{u}_m\\
 &-\left(\nabla\tilde{h}_k^u(\mathbf{u}_m^{(n)})+\frac{1}{2}\iota_k^u\mathbf{u}_m^{(n)}\right)^T\mathbf{u}_m^{(n)}+\tilde{h}_k^u(\mathbf{u}_m^{(n)}),
\end{align}
from which we can obtain a lower bound of $\tilde{h}_k^u(\mathbf{u}_m)$.
}

{
\section{A global upper bound of $h_i^u(\mathbf{u}_m)$}\label{proof4}
Denote $\bm{\Theta}=[\bm{\theta_1},\dots,\bm{\theta_M}]$, where $\bm{\theta}_i=[\theta_{1i},\dots,\theta_{Mi}]^T$, and we can rewrite $|\bm{\omega}_k^H\mathbf{H}_0 \bm{\Theta}\mathbf{h}_i|^2$ as shown in \eqref{proof41}, at the top of the next page, where $h_i(m)$ is the $m$-th entry of $\mathbf{h}_i$.

\begin{figure*}
{
    \begin{align}\label{proof41}
    \nonumber
  h_k^i(\mathbf{u}_m)
  &=  
   \mathrm{tr}\left[h_i(m)h_i^*(m)\bm{\theta}_m\bm{\theta}_m^H\hat{\mathbf{H}}_0\right]+ \mathrm{tr}\left[h_i(m)\bm{\theta}_m\sum_{l=1,l\neq i}^Mh_i^*(l)\bm{\theta}_l^H\hat{\mathbf{H}}_0\right]\\
   &+\mathrm{tr}\left[\sum_{l=1,l\neq i}^Mh_i(l)\bm{\theta}_l h_i^*(m)\bm{\theta}_m^H \hat{\mathbf{H}}_0\right] + \mathrm{tr}\left[\sum_{l=1,l\neq i}^Mh_i(l)\bm{\theta}_l\sum_{p=1,p\neq i}^M h_i^*(p)\bm{\theta}_p^H \hat{\mathbf{H}}_0\right],
    \end{align}
}
\hrulefill

\end{figure*}

Based on \eqref{proof41}, we can further have the following equality and inequalities as
\begin{align}\label{proof42}
     \nonumber
     h_i^u(\mathbf{u}_m)&\overset{(\mathrm{a})}{=}
    \mathrm{tr}\left[h_i(m)h_i^*(m)\bm{\theta}_m\bm{\theta}_m^H\hat{\mathbf{H}}_0\right]+ 2\underbrace{\mathrm{Re}\left(\bm{\theta}_m^H\mathbf{c}_i^u\right)}_{\triangleq h_i^u(\mathbf{u}_m)}+C_3\\
    \nonumber
    &\overset{(\mathrm{b})}{\leq}
    M\lambda_{\mathrm{max}}^{u,i}+ 2 h_i^u(\mathbf{u}_m)+\mathrm{Re}(C_3)\\
    \nonumber
    &\overset{(\mathrm{c})}{\leq}
    M\lambda_{\mathrm{max}}^{u,i}+\mathrm{Re}(C_3)+2h_i^u(\mathbf{u}_m^{(n)})+\iota_i^u\mathbf{u}_m^T\mathbf{u}_m\\
    \nonumber
    &+2\left(\nabla h_i^u(\mathbf{u}_m^{(n)})-\iota_i^u\mathbf{u}_m^{(n)}\right)^T\mathbf{u}_m\\
    &+\left(2\nabla{h}_i^u(\mathbf{u}_m^{(n)})+\iota_i^u\mathbf{u}_m^{(n)}\right)^T\mathbf{u}_m^{(n)}\triangleq\tilde{\tilde{h}}_i^u(\mathbf{u}_m),
\end{align}
where $\mathbf{c}_i^u=\hat{\mathbf{H}}_0\sum_{l=1,l\neq i}^Mh_i(l)\bm{\theta}_l h_i^*(m)$ is a constant vector, $C_3=\mathrm{tr}\left[\sum_{l=1,l\neq i}^Mh_i(l)\bm{\theta}_l\sum_{p=1,p\neq i}^M h_i^*(p)\bm{\theta}_p^H \hat{\mathbf{H}}_0\right]$ is a constant scalar, and $\iota_k^i$ is a positive real number making $\iota_i^u\mathbf{I}_3-\nabla^2 h_i^u(\mathbf{u}_m)\succeq0$.

The equality $(\mathrm{a})$ holds because $\hat{\mathbf{H}}_0$ is a Hermitian matrix, and then the second term and the third term of \eqref{proof41} can be denoted as $  \mathrm{tr}\left[(\mathbf{c}_i^u)^{H}\bm{\theta}_m\right]$ and $\mathrm{tr}\left[\bm{\theta}_m^H\mathbf{c}_i^u\right]$, respectively, the sum of which is $2\mathrm{Re}\left(\bm{\theta}_m^H\mathbf{c}_i^u\right)$. The inequality $(\mathrm{b})$ holds because Lemma~\ref{lemma3} provides a upper bound of the first item of $h_i^u(\mathbf{u}_m)$, and Lemma~\ref{lemma4} proves the inequality marked by $(\mathrm{c})$.

\begin{lemma}\label{lemma3}
    Since $h_i(m)h_i^*(m)\hat{\mathbf{H}}_0$ is a Hermitian matrix, we denote its largest eigenvalue is $\lambda_{\mathrm{max}}^{u,i}$, and we can get that
    \begin{equation}\label{proof43}
        \mathrm{tr}\left[h_i(m)h_i^*(m)\bm{\theta}_m\bm{\theta}_m^H\hat{\mathbf{H}}_0\right] \leq M\lambda_{\mathrm{max}}^{u,i}.
    \end{equation}
\end{lemma}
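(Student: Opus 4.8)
The plan is to collapse the trace on the left of \eqref{proof43} into a Rayleigh quotient and then invoke the Rayleigh--Ritz characterization of the largest eigenvalue of a Hermitian matrix. First I would observe that $h_i(m)h_i^*(m)=|h_i(m)|^2$ is a non-negative real scalar, so that $\mathbf{A}\triangleq|h_i(m)|^2\hat{\mathbf{H}}_0$ is Hermitian, inheriting this structure from $\hat{\mathbf{H}}_0=\mathbf{H}_0^H\bm{\omega}_k\bm{\omega}_k^H\mathbf{H}_0$ defined in \eqref{sub22b0}; in fact $\hat{\mathbf{H}}_0\succeq0$, so $\mathbf{A}\succeq0$ and its largest eigenvalue $\lambda_{\mathrm{max}}^{u,i}$ is non-negative.

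Next I would use the cyclic invariance of the trace to write
\[
\mathrm{tr}\!\left[|h_i(m)|^2\bm{\theta}_m\bm{\theta}_m^H\hat{\mathbf{H}}_0\right]
=\mathrm{tr}\!\left[\bm{\theta}_m\bm{\theta}_m^H\mathbf{A}\right]
=\mathrm{tr}\!\left[\bm{\theta}_m^H\mathbf{A}\bm{\theta}_m\right]
=\bm{\theta}_m^H\mathbf{A}\bm{\theta}_m,
\]
where the argument collapses to a scalar after the cyclic shift. Since $\mathbf{A}$ is Hermitian, the Rayleigh--Ritz bound gives $\bm{\theta}_m^H\mathbf{A}\bm{\theta}_m\leq\lambda_{\mathrm{max}}^{u,i}\,\|\bm{\theta}_m\|_2^2$, which holds regardless of the sign of $\lambda_{\mathrm{max}}^{u,i}$.

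The final---and really the only nontrivial---step is to evaluate $\|\bm{\theta}_m\|_2^2$. Here I would use that $\bm{\theta}_m$ is the $m$-th column of the blockage-through matrix $\bm{\Theta}$ in \eqref{FARMatrix}, whose entries all take the constant-envelope form $e^{-j\theta_{pm}}$ and hence have unit modulus; consequently $\|\bm{\theta}_m\|_2^2=\sum_{p=1}^{M}|e^{-j\theta_{pm}}|^2=M$. Substituting this into the Rayleigh--Ritz bound yields \eqref{proof43}. I expect the main subtlety to be purely conceptual bookkeeping: recognizing that the unit-modulus property of $\bm{\Theta}$---a direct consequence of fixing $\mathbf{F}_U=\mathbf{F}_B=\sqrt{\alpha^{-1}}\mathbf{I}_M$ so that the amplitude factors cancel in \eqref{FARMatrix}---is precisely what pins $\|\bm{\theta}_m\|_2^2$ to the constant $M$, rather than leaving it as a position-dependent quantity that would invalidate the uniform upper bound.
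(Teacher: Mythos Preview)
Your proposal is correct and follows essentially the same approach as the paper: both rewrite the trace as the quadratic form $\bm{\theta}_m^H\mathbf{A}\bm{\theta}_m$ with $\mathbf{A}=h_i(m)h_i^*(m)\hat{\mathbf{H}}_0$, invoke Rayleigh--Ritz, and use $\bm{\theta}_m^H\bm{\theta}_m=M$. Your added remark that the unit-modulus structure of $\bm{\Theta}$ is what fixes $\|\bm{\theta}_m\|_2^2=M$ is a nice explicit justification that the paper leaves implicit.
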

\begin{proof}
    See Appendix~\ref{proof5}.
\end{proof}
\begin{lemma}\label{lemma4}
    Denote the $l$-th entry of $\mathbf{c}_i^u$ is $|c_i^u(l)|e^{j\angle c_i^u(l)}$, and $C_4=\frac{C_1(T^2+W^2)}{W^3}+\frac{C_1^2T^2}{W^2}$. Then, select $\iota_i^u = 2C_4\sum_{l=1}^M|c_i^u(l)|$, we have a global bound of $h_i^u(\mathbf{u}_m)$ as 
    \begin{align}\label{proof44}
    \nonumber
        &h_i^u(\mathbf{u}_m) \leq \frac{1}{2}\iota_i^u\mathbf{u}_m^T\mathbf{u}_m+\left(\nabla h_i^u(\mathbf{u}_m^{(n)})-\iota_i^u\mathbf{u}_m^{(n)}\right)^T\mathbf{u}_m\\
    &+\left(\frac{1}{2}\iota_i^u\mathbf{u}_m^{(n)}-\nabla{h}_i^u(\mathbf{u}_m^{(n)})\right)^T\mathbf{u}_m^{(n)}+h_i^u(\mathbf{u}_m^{(n)}).
    \end{align}
\end{lemma}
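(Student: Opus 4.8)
The plan is to mirror the lower-bound construction of Appendix~\ref{proof3}, but run the second-order Taylor argument in the upper direction, after first certifying that the scalar $\iota_i^u=2C_4\sum_{l=1}^M|c_i^u(l)|$ dominates the spectral norm of the Hessian of $h_i^u(\mathbf{u}_m)$ everywhere. Recall from \eqref{proof42} that the quantity to be bounded is $h_i^u(\mathbf{u}_m)=\mathrm{Re}(\bm{\theta}_m^H\mathbf{c}_i^u)$ with $\mathbf{c}_i^u$ a constant vector, so that $\mathbf{u}_m$ enters only through the $m$-th column $\bm{\theta}_m$ of $\bm{\Theta}$, whose $p$-th entry carries the phase $\theta_{pm}=\hat{C}_1\|\mathbf{b}_p-\mathbf{u}_m\|_2$ by \eqref{pahseShift}, where $\hat{C}_1=C_1\mathrm{sec}^{\frac{1}{2}}(\gamma)\mathrm{cos}(\frac{1}{2}\gamma)$. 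Writing each entry of $\mathbf{c}_i^u$ in polar form, I would first rewrite $h_i^u(\mathbf{u}_m)=\sum_{p=1}^M|c_i^u(p)|\cos(\hat{C}_1\|\mathbf{b}_p-\mathbf{u}_m\|_2+\angle c_i^u(p))$, which is exactly the distance-only counterpart of the scalar function differentiated in Appendix~\ref{Proof2}.

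Next I would compute $\nabla h_i^u(\mathbf{u}_m)$ and $\nabla^2 h_i^u(\mathbf{u}_m)$ by the chain rule using $\nabla\|\mathbf{b}_p-\mathbf{u}_m\|_2=(\mathbf{u}_m-\mathbf{b}_p)/\|\mathbf{b}_p-\mathbf{u}_m\|_2$. In contrast with \eqref{proof26}, no wave-vector term appears here because the phase depends on $\mathbf{u}_m$ only through the distances; the Hessian therefore reduces to a sum over $p$ of a curvature term $\hat{C}_1^2\cos(\cdot)(\mathbf{u}_m-\mathbf{b}_p)(\mathbf{u}_m-\mathbf{b}_p)^T/\|\mathbf{b}_p-\mathbf{u}_m\|_2^2$ and a term $\hat{C}_1\sin(\cdot)(\mathbf{I}_3/\|\mathbf{b}_p-\mathbf{u}_m\|_2-(\mathbf{u}_m-\mathbf{b}_p)(\mathbf{u}_m-\mathbf{b}_p)^T/\|\mathbf{b}_p-\mathbf{u}_m\|_2^3)$, each scaled by $|c_i^u(p)|$.

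Then, repeating the entrywise estimate of \eqref{proof27}--\eqref{proof29}, I would bound this Hessian. With $T=\max\{L,W,H\}$ every coordinate gap satisfies $|b_p-u_m|\le T$, the two antenna planes are $W$ apart so $\|\mathbf{b}_p-\mathbf{u}_m\|_2\ge W$, and $\hat{C}_1\le C_1$ as in \eqref{proof28}; hence every entry of $\nabla^2 h_i^u(\mathbf{u}_m)$ is at most $\sum_{p}|c_i^u(p)|(\frac{C_1^2T^2}{W^2}+\frac{C_1(T^2+W^2)}{W^3})=C_4\sum_p|c_i^u(p)|$ in modulus, which is precisely the $C_4$ of the statement. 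Since $y_{u_m}=Y_0+W$ is fixed, the partial derivatives with respect to $y_{u_m}$ vanish and the Hessian has an effectively $2\times2$ nonzero block, so $\|\nabla^2 h_i^u(\mathbf{u}_m)\|_2\le\|\nabla^2 h_i^u(\mathbf{u}_m)\|_F\le 2C_4\sum_p|c_i^u(p)|=\iota_i^u$, which yields $\iota_i^u\mathbf{I}_3-\nabla^2 h_i^u(\mathbf{u}_m)\succeq0$ for every feasible $\mathbf{u}_m$.

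Finally I would close the argument as in \eqref{proof33}, but in the reverse sense: by Taylor's theorem there is a $\xi$ on the segment between $\mathbf{u}_m$ and $\mathbf{u}_m^{(n)}$ with $h_i^u(\mathbf{u}_m)=h_i^u(\mathbf{u}_m^{(n)})+\nabla h_i^u(\mathbf{u}_m^{(n)})^T(\mathbf{u}_m-\mathbf{u}_m^{(n)})+\frac12(\mathbf{u}_m-\mathbf{u}_m^{(n)})^T\nabla^2 h_i^u(\xi)(\mathbf{u}_m-\mathbf{u}_m^{(n)})$, and since $\nabla^2 h_i^u(\xi)\preceq\iota_i^u\mathbf{I}_3$ the remainder is upper bounded by $\frac12\iota_i^u\|\mathbf{u}_m-\mathbf{u}_m^{(n)}\|_2^2$; substituting this and collecting the quadratic, linear, and constant terms in $\mathbf{u}_m$ gives \eqref{proof44}. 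The main obstacle is the Hessian-norm certificate for $\iota_i^u$: the curvature terms diverge as $\mathbf{u}_m\to\mathbf{b}_p$, so the whole bound hinges on the geometric separation $\|\mathbf{b}_p-\mathbf{u}_m\|_2\ge W$ of the two FA planes to keep $C_4$ finite, together with the fixed-$y$ reduction that produces the clean factor $2$ in $\iota_i^u$.
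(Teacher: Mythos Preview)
Your proposal is correct and follows essentially the same route as the paper's proof in Appendix~\ref{proof6}: you expand $h_i^u(\mathbf{u}_m)=\mathrm{Re}(\bm{\theta}_m^H\mathbf{c}_i^u)$ as a sum of cosines of $\hat{C}_1\|\mathbf{b}_p-\mathbf{u}_m\|_2+\angle c_i^u(p)$, compute the gradient and Hessian exactly as the paper does, bound each Hessian entry by $C_4\sum_p|c_i^u(p)|$ via the same geometric estimates ($\|\mathbf{b}_p-\mathbf{u}_m\|_2\ge W$, coordinate gaps $\le T$, $\hat{C}_1\le C_1$), invoke the fixed-$y$ reduction to get $\|\nabla^2 h_i^u\|_2\le\|\nabla^2 h_i^u\|_F\le\iota_i^u$, and then run the second-order Taylor remainder in the upper direction. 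The structure, the constants, and the final majorization all coincide with the paper's argument.
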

\begin{proof}
    See Appendix~\ref{proof6}.
\end{proof}

\section{Proof of Lemma~\ref{lemma3}}\label{proof5}
Denote $\mathbf{\hat{H}}_0^m=h_i(m)h_i^*(m)\hat{\mathbf{H}}_0$, and the left hand side term of \eqref{proof43} can be rewritten as 
\begin{equation}
    \mathrm{tr}\left[h_i(m)h_i^*(m)\bm{\theta}_m\bm{\theta}_m^H\hat{\mathbf{H}}_0\right]=\bm{\theta}_m^H\mathbf{\hat{H}}_0^m\bm{\theta}_m.
\end{equation}
Based on Rayleigh-Ritz theorem, for any $\bm{\theta}_m\neq \mathbf{0}$, we have
\begin{equation}
\bm{\theta}_m^H\mathbf{\hat{H}}_0^m\bm{\theta}_m \leq \lambda_{\mathrm{max}}^{u,i}\bm{\theta}_m^H\bm{\theta}_m=M\lambda_{\mathrm{max}}^{u,i}.
\end{equation}

\section{Proof of Lemma~\ref{lemma4}}\label{proof6}

Similar to \eqref{proof31} in Appendix~\ref{proof3}, at a given point $\mathbf{u}_m^{(n)}$, there is $\xi$ satisfying
\begin{align}\label{proof61}
  \nonumber  h_i^u(\mathbf{u}_m)&=h_i^u(\mathbf{u}_m^{(n)})+\nabla h_i^u(\mathbf{u}_m^{(n)})^T\left(\mathbf{u}_m-\mathbf{u}_m^{(n)}\right)\\
 &+\frac{1}{2}\left(\mathbf{u}_m-\mathbf{u}_m^{(n)}\right)^T\nabla^2h_i^u(\xi)\left(\mathbf{u}_m-\mathbf{u}_m^{(n)}\right),
\end{align}

With a $\iota_i^u\geq ||\nabla^2h_i^u(\mathbf{u}_m)||_2$, we have
\begin{equation}\label{proof62}
(\mathbf{u}_m-\mathbf{u}_m^{(n)})^T(\iota_i^u\mathbf{I}_3-\nabla^2 h_i^u(\xi))(\mathbf{u}_m-\mathbf{u}_m^{(n)})\succeq0.
\end{equation}

Then, substituting $\nabla^2 h_i^u(\xi)$ in \eqref{proof61} by $\iota_i^u\mathbf{I}_3$, we can get the relationship shown in \eqref{proof44}.

To determine such a $\iota_i^k$, we first derive the Hessian matrix of $h_i^u(\mathbf{u}_m)$, and find a positive real number greater than its Frobenius norm. 

With the definition about $\mathbf{c}_i^u$ in Lemma~\ref{lemma4}, we have 
\begin{align}
    \nonumber h_i^u(\mathbf{u}_m)&=\mathrm{Re}\left\{\sum_{l=1}^M|c_i^u(l)|e^{j[\theta_{lm}+\angle c_i^u(l)]}\right\}\\
    &=\sum_{l=1}^M|c_i^u(l)|\mathrm{cos}[\theta_{lm}+\angle c_i^u(l)].
\end{align}
The gradient and the Hessian matrix of it can be respectively given as
\begin{equation}
    \frac{\partial h_i^u(\mathbf{u}_m)}{\partial \mathbf{u}_m}=\sum_{l=1}^M \frac{\hat{C}_1|c_i^u(l)|\mathrm{sin}[\theta_{lm}+\angle c_i^u(l)](\mathbf{b}_l-\mathbf{u}_m)}{||\mathbf{b}_l-\mathbf{u}_m||_2},
\end{equation}
and
\begin{align}
\nonumber
    &\frac{\partial^2 h_i^u(\mathbf{u}_m)}{\partial \mathbf{u}_m\partial \mathbf{u}_m^T}=\sum_{l=1}^M|c_i^u(l)|\Bigg\{\hat{C}_1\mathrm{sin}[\theta_{lm}+\angle c_i^u(l)]\\
    \nonumber
    &\Big[\frac{-\mathbf{I}_3}{||\mathbf{b}_i-\mathbf{u}_m||_2}+\frac{(\mathbf{b}_i-\mathbf{u}_m)(\mathbf{b}_i-\mathbf{u}_m)^T}{||\mathbf{b}_i-\mathbf{u}_m||_2^3}\Big]\\
    &-\hat{C}_1^2\mathrm{cos}[\theta_{lm}+\angle c_i^u(l)]\frac{(\mathbf{b}_l-\mathbf{u}_m)(\mathbf{b}_l-\mathbf{u}_m)^T}{||\mathbf{b}_l-\mathbf{u}_m||_2^2}\Bigg\}.
\end{align}

Similarly to the inequality in \eqref{proof28}, we have
\begin{align}
    \nonumber
    &||\nabla^2h_i^u(\mathbf{u}_m)||_2^2 \leq ||\nabla^2h_i^u(\mathbf{u}_m)||_F^2\\
    &\leq4\left[\underbrace{\left(\frac{C_1(T^2+W^2)}{W^3}+\frac{C_1^2T^2}{W^2}\right)}_{\triangleq C_4}\sum_{l=1}^M|c_i^u(l)|\right]^2 .
\end{align}
Select $\iota_i^u=2C_4\sum_{l=1}^M|c_i^u(l)|$, where $\iota_i^u>||\nabla^2h_i^u(\mathbf{u}_m)||_2$, i.e., $\iota_i^u\mathbf{I}_3-\nabla^2h_i^u(\mathbf{u}_m)\succeq0$.
}
\end{appendices}
\bibliographystyle{IEEEtran}
\bibliography{main}

\end{document}